\NewDocumentCommand\DownArrow{O{2.0ex} O{black}}{%
   \mathrel{\tikz[baseline] \draw [<-, line width=0.5pt, #2] (0,0) -- ++(0,#1);}
}
\tikzset{every picture/.style={draw=black,thick}}
\newcommand{\ignore}[1]{}
\newcommand{\sbprgraph}{}
\definecolor{grey}{rgb}{0.6,0.3,0.3}
\definecolor{lgrey}{rgb}{0.9,.7,0.7}
\definecolor{lightblue}{rgb}{0.5,.5,.8}
\def\spacingset#1{\def\baselinestretch{#1}\small\normalsize}
\newtheorem{thm}{Theorem}
\newtheorem{cor}[thm]{Corollary}
\newtheorem{lemma}[thm]{Lemma}
\newtheorem{prop}[thm]{Proposition}
\title{\huge Macroscopic network circulation for planar graphs}
\author{Fariba Ariaei\textsuperscript{*}, \and Zahra Askarzadeh\textsuperscript{*}, \and Yongxin Chen, \and Tryphon T. Georgiou\thanks{F.\ Ariaei, Z.\ Askarzadeh, and T.\ T. Georgiou are with the Department of Mechanical and Aerospace Engineering, University of California, Irvine, CA; rfu2@uci.edu, ataghvae@uci.edu, tryphon@uci.edu}
\thanks{Y.~Chen is with the School of Aerospace Engineering, Georgia Institute of Technology, Atlanta, GA 30332; {yongchen@gatech.edu}}\thanks{\noindent\textsuperscript{*}F.\ Ariaei and Z.\ Askarzadeh contributed equally as first authors.}
\thanks{Research supported in part by the NSF under grants 1807664, 1839441, 1901599, and the AFOSR under FA9550-20-1-0029.}}
\tikzset{my loop/.style =  {to path={
  \pgfextra{}
  [looseness=4,min distance=3mm,->]
  \tikz@to@curve@path},font=\sffamily\small
  }}  
\begin{document}
\maketitle

\begin{abstract} The analysis of networks, aimed at suitably defined functionality, often focuses on partitions into subnetworks that capture desired features. Chief among the relevant concepts is a {\em 2-partition}, that underlies the classical Cheeger inequality, and highlights a constriction (bottleneck) that limits accessibility between the respective parts of the network. In a similar spirit, the purpose of the present work is to introduce a new concept of {\em maximal global circulation} and to explore  {\em 3-partitions} that expose this type of macroscopic feature of networks.
Herein, graph circulation is motivated by transportation networks and probabilistic flows (Markov chains) on graphs. Our goal is to quantify the large-scale imbalance of network flows and delineate key parts that mediate such global features. While we introduce and propose these notions in a general setting, in this paper, we only work out the case of planar graphs. We explain that a scalar potential can be identified to encapsulate the concept of circulation, quite similarly as in the case of the curl of planar vector fields. Beyond planar graphs, in the general case, the problem to determine global circulation remains at present a combinatorial problem.\\[-.35in]
\end{abstract}

\section{Introduction}
Time asymmetry of traffic flow in city streets is unmistakeable. Specifically, traffic flows in one direction around city squares and, often, in one-way in many city streets as well. Yet, from a macroscopic vantage point, circulation may or may not be evident. Flux from one part of town to another may average out with flux in the opposite direction. When this is not the case, it is of interest to identify the nature and to quantify any large scale imbalance in global circulation. What we seek in the present article is precisely such a notion of {\em macroscopic circulation} that, depending on the network and flow conditions, captures the flow asymmetry that points to a preference in directionality while traversing the graph.

Perhaps, circulation is nowhere more apparent than in air currents at the planetary scale.  The vorticity, locally as well as at earth-scale, very much as in planar vector fields, can be quantified by a suitably defined scalar potential and, as we will explain, this scalar potential helps quantify maximal circulation macroscopically. A purpose of this work is to define a notion of macroscopic circulation on discrete spaces, namely graphs, and utilize a similar construct of a scalar potential for its computation.

At present, in full generality, macroscopic circulation remains a combinatorial problem. In particular, it is not known whether it relates in any way to spectral properties of graph Laplacians (as is the case for the Cheeger constant, that quantifies graph bottlenecks) or to other topological characteristics of the graph. Thus, in this work, we proceed to explore the special case of {\em embedded planar graphs} and, taking advantage of insights from the well-known Helmholtz-Hodge decomposition of vector fields, we explain how macroscopic circulation can be effectively computed by determining the range of values of a scalar potential with support on the nodes of the dual graph.

The mathematical setting that we contemplate model flows on graphs is that of a stationary discrete-time Markov model, where the probability flux represents a vector field on the network of the nodes and edges of the Markov chain. 
In this setting, the famous Cheeger inequality relates the likelihood of transitioning between two parts in a 2-partition of the nodes, across, in either direction, as well as the rate of mixing, to spectral properties  of the graph Laplacian. As a 2-partition aims to capture bottlenecks that impede mixing, the Cheeger constant duly takes into account the size of boundary between the two parts. 

In a similar manner, contemplating the elements of circulatory imbalance, we are led naturally to a 3-partition of the network. 
After all, in a 2-partition, the probability current across the boundary, at stationarity, balances out.
Tell-tale signs of circulatory asymmetry requires at least three parts.
In general, flux-imbalance at the micro or macro level may manifest itself when more that two components interact and exchange ``mass.''
For a 3-state partitioning of a network into parts $A$, $B$, and $C$, the net flow from $A\to B$ (which is considered positive when the net flux is in the direction of $B$), by detailed-balance, must equal to the net flux from $B\to C$, and must also equal the net flux from $C\to A$. Thereby, the asymetry manifests itself as a network circulation current. For reasons similar to those underlying the Cheeger constant, careful consideration of the size and regularity of the boundary between the three parts is warranted and need
to be duly restricted.

The structure of the paper is as follows. In section \ref{sec:markov}, we discuss connection between probability currents and flow fields on graphs. In Section \ref{sec:circulation}, we highlight the concept of macroscopic circulation in the context of Markov chains. The setting of Markov chains is not restricted by the dimensionality of possible embedding of the respective graph, but the formulation of global circulation in general requires further refinement. In Section \ref{sec:planarity}, we discuss planar graph and the decomposition of flows accordingly. In section \ref{sec:partitioning}, we introduce an algorithm for calculating scalar potential supported on the dual graph and propose a method for partitioning the graph into three parts and calculating macroscopic circulation. The issue of embedding is revisited in Section \ref{sec:embedding} where it is explained that a given graph may have non-equivalent embeddings, leading to different values for the macroscopic circulation. In Section \ref{sec:examples}, we detail additional illuminating examples.

\section{Probability currents \& flow fields on graphs}
\label{sec:markov}

Herein we explain the correspondence between flows on graphs and probability currents induced by a Markov structure. Flow fields represent the analogue of vector fields, though we avoid the term ``vector'' to highlight the difference with ordinary vector fields on manifolds.

Consider a time-homogeneous, discrete-time, $N$-state finite Markov chain $X_t$, with $t\in\mathbb N$, with states $\mathcal V=\{v_1,\ldots,v_N\}$, comprised of the nodes of a network, and transition probabilities $\pi_{v_i,v_j}$, i.e., 
\[
\mathbb P\left\{X_{t+1}=v_j\mid X_{t}= v_i\right\}=\pi_{v_i,v_j}.
\]
We assume that the Markov chain is ergodic and hence, irreducible and aperiodic.
Thus, the matrix $\Pi:=[\pi_{v_i,v_j}]_{v_i,v_j=1}^N$  has non-negative entries and is such that $\Pi\mathds{1}=\mathds{1}$, where $\mathds{1}$ denotes a column vector with all entries equal to $1$. The ergodicity assumption implies that for a sufficiently large integer $k$ (e.g., $k=N$), $\Pi^k$ has all entries positive. The dimensionality of vectors and matrices will be explicit, unless their dimension is clear from the context.

The Markov chain is associated to a simple graph $\mathcal G:=(\mathcal V,\mathcal E)$,
where the (directed) edge set $\mathcal E$ is specified by the allowed transitions, i.e., 
\[
\mathcal E=\{e=(v_i,v_j)\,|\, \pi_{v_i,v_j}\neq 0\}.
\]
Throughout we consider $\mathcal G$ to have only one edge for any ordered pair of nodes, i.e., that it is simple.  Further, we consider $\mathcal G$ to be strongly connected that follows from the ergodicity assumption of the Markov chain.

Let now $\pi=[\pi_{v_i}]_{v_i=1}^N$ denote the (column) stationary probability vector of the Markov chain. Thus, 
\[
\pi^T\Pi=\pi^T
\] 
and $\pi^T$ is the (unique left/row Frobenious-Perron) eigenvector of $\Pi$ with eigenvalue $1$. Throughout, $^T$ denotes transposition.
The entries of
\begin{equation}\label{eq:P}
P:= {\rm diag}(\pi_{v_1},\dots, \pi_{v_N})\Pi
\end{equation}
represent probability current $p_{v_i,v_j}=\pi_{v_i}\pi_{v_i,v_j}$ from vertex $v_i$ to $v_j$. Probability currents quantify {\em flux} on $\mathcal G$.

Our aim in this paper is to identify (large scale) imbalance in the {\em net flux} across $\mathcal G$, and to this end, we will be working mostly with the anti-symmetric part of $P$ (modulo a factor of $1/2$)
\begin{equation}
F = P-P^T.
\label{netflux}
\end{equation}
This retains information on only local flux imbalance between any two nodes.
Note that since, $\Pi\mathds{1}=\mathds{1}$, it follows that $P\mathds{1}=P^T\mathds{1}$, and therefore, that
$F\mathds{1}=\mathbf 0$ (the zero vector) as well. 

We view the matrix $F$ as representing a ``divergence free'' (i.e., with no sources) {\em flow (``vector'') field} on $\mathcal G$. Besides the fact that
\begin{subequations}
\begin{align}\label{eq:F1}
&F=-F^T\\\label{eq:F2}
&F\mathds{1}=\mathbf 0,
\end{align}
the positive part of $F$, namely,
\[F_+:=\left[ \max\{F_{ij},0\}\right]_{i,j=1}^n,
\]
has entries that are $\leq$ to those of $P$, and since $\mathds{1}^TP\mathds{1}=1$,
\begin{align}
&\mathds{1}^TF_+\mathds{1}\leq 1.\label{eq:F3}
\end{align}
\end{subequations}

It turns out that (\ref{eq:F1}-\ref{eq:F3}) characterize divergence-free flow fields on graphs. I.e., antisymmetric matrices with the above properties originate from a Markovian probability structure. We state the precise result below.\\[-.05in]

\begin{prop}
Consider an $N\times N$ matrix $F$ and assume that (\ref{eq:F1}-\ref{eq:F3}) hold. In case
\eqref{eq:F3} holds with equality, assume that $\mathds{1}^TF_+$ has all entries positive. Then, $F$ originates as a divergence-free flow-field on a graph $\mathcal G=(\mathcal V,\mathcal E)$, with $|\mathcal V|=N$, associated with a Markov probability structure.\\[-.35in]
\end{prop}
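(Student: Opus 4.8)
The plan is to reconstruct a Markov transition matrix $\Pi$ and a stationary vector $\pi$ from the given $F$, by first choosing $P$ to be a valid probability current matrix whose antisymmetric part is $F$, and then reading off $\Pi$ and $\pi$ from $P$. The natural guess is to take $P = F_+ + D$ for a suitable symmetric nonnegative ``diagonal-plus-correction'' term $D$ that restores the total mass to $1$ and makes the row sums consistent with a stationary distribution. Concretely, I would set $P_{ij} = (F_+)_{ij}$ for $i \neq j$ — note $(F_+)_{ij}$ and $(F_+)_{ji}$ are never both nonzero since $F$ is antisymmetric, so $P - P^T = F_+ - F_+^T = F$ automatically off the diagonal — and then put all remaining mass on the diagonal. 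The diagonal entries are free parameters as far as \eqref{netflux} is concerned, so they can be used both to make $\mathds{1}^T P \mathds{1} = 1$ and, crucially, to make $P$ into $\operatorname{diag}(\pi)\Pi$ for a genuine stochastic $\Pi$.

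The key computation is the stationarity check. Define $\pi$ by $\pi_{v_i} := \sum_j P_{ij}$ (the $i$-th row sum of $P$); we need $\pi^T \Pi = \pi^T$, i.e.\ the column sums of $P$ equal the row sums of $P$. But the column sums of $P$ minus the row sums of $P$ is exactly $\mathds{1}^T P - (P\mathds{1})^T = \mathds{1}^T(P - P^T) = \mathds{1}^T F$, and $\mathds{1}^T F = (F\mathds{1})^T = \mathbf 0$ by \eqref{eq:F1}--\eqref{eq:F2}. Hence any $P$ with antisymmetric part $F$ and nonnegative entries and unit total mass automatically has equal row and column sums, so $\pi$ is stationary for $\Pi := \operatorname{diag}(\pi)^{-1} P$ provided $\pi$ has all positive entries (so that the inverse exists) — this is where the positivity hypotheses enter. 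Under the strict inequality in \eqref{eq:F3} we have slack to add a strictly positive amount to every diagonal entry, forcing every row sum $\pi_{v_i}$ to be strictly positive; under equality we are told $\mathds{1}^T F_+$ has all positive entries, which gives positive column sums, hence positive row sums since they coincide, so again $\pi > 0$. Normalizing is unnecessary because $\mathds{1}^T P \mathds{1} = 1$ already forces $\mathds{1}^T \pi = 1$.

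The remaining points are routine: $\Pi = \operatorname{diag}(\pi)^{-1}P$ has nonnegative entries and $\Pi \mathds{1} = \operatorname{diag}(\pi)^{-1} P \mathds{1} = \operatorname{diag}(\pi)^{-1}\pi = \mathds{1}$, so $\Pi$ is stochastic; the induced graph $\mathcal G = (\mathcal V, \mathcal E)$ with $\mathcal E = \{(v_i,v_j) : \Pi_{ij} \neq 0\}$ and $|\mathcal V| = N$ is exactly the one carrying $F$ as its divergence-free flow field, since $P = \operatorname{diag}(\pi)\Pi$ matches \eqref{eq:P} and $F = P - P^T$ matches \eqref{netflux}. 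One should double-check that the off-diagonal prescription $P_{ij} = (F_+)_{ij}$ together with a valid diagonal choice keeps all row sums $\le 1$ after dividing, i.e.\ that $\Pi$ is genuinely a probability matrix — but this is immediate once the diagonal is chosen to absorb exactly the deficit $1 - \mathds{1}^T F_+ \mathds{1} \ge 0$ and distributed so that no row sum of $P$ exceeds the corresponding $\pi_{v_i}$; putting each node's share of the deficit on its own diagonal entry does the job.

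\textbf{Main obstacle.} The only genuinely delicate step is arranging the diagonal of $P$ so that simultaneously (i) every row sum $\pi_{v_i}$ is strictly positive, (ii) the total mass is exactly $1$, and (iii) the resulting $\Pi$ is stochastic; the equality case of \eqref{eq:F3} is where this is tightest, and it is exactly there that the extra hypothesis on $\mathds{1}^T F_+$ is invoked. Everything else reduces to the identity $\mathds{1}^T F = \mathbf 0 = F\mathds{1}$, which does all the real work. Note that this construction makes no attempt to produce an \emph{ergodic} chain — the Proposition as stated does not ask for that — and in degenerate cases the associated graph may even be disconnected; if ergodicity were wanted one would perturb the diagonal by a small amount to fill in a spanning structure, but that is outside the claim.
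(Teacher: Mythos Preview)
Your proposal is correct and follows essentially the same route as the paper: build $P$ as $F_+$ plus a symmetric nonnegative correction with total mass $1-\mathds{1}^TF_+\mathds{1}$, set $\pi$ to be the (row $=$ column) sums of $P$, and take $\Pi=\operatorname{diag}(\pi)^{-1}P$; the identity $F\mathds{1}=\mathbf 0$ does the work in both versions. The only cosmetic difference is that you put the correction on the diagonal whereas the paper allows an arbitrary symmetric nonnegative $M$---either choice suffices, and your final ``double-check'' paragraph is unnecessary since $\Pi\mathds{1}=\mathds{1}$ already follows automatically from $\pi:=P\mathds{1}$.
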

\begin{proof}
If \eqref{eq:F3} holds with equality, let $P=F_+$, otherwise
define
\begin{equation}
P :=  M + F_+, 
\label{P}
\end{equation}
for a symmetric matrix $M=M^T$, of the same size, with nonnegative entries such that
\[ \mathds{1}^{T} M \mathds{1} = 1-\mathds{1}^{T} F_{+} \mathds{1},
\]
ensuring that $\pi^T:=\mathds{1}^TP$ has all entries positive.
This is clearly possible from the standing assumptions. Now, verify that
\begin{equation}\label{eq:Pichoice}
\Pi={\rm diag}(\pi_{v_1},\dots, \pi_{v_N})^{-1}P
\end{equation}
is a transition probability matrix that leads to the divergence-free flow field $F$.
Specifically,
 i) $\Pi$ has non-negative entries. ii) In view of $\pi^T= \mathds{1}^TP$ and \eqref{eq:Pichoice}, $\pi^T\Pi=\pi^T$ holds.
iii)
Note that $F=F_+-F_+^T$ and hence, $F_+\mathds{1}=F_+^T\mathds{1}$ from \eqref{eq:F2}. It follows that $P\mathds{1}=P^T\mathds{1}$, and from \eqref{eq:Pichoice} the definition $\pi^T= \mathds{1}^TP$, that
$\Pi\mathds{1}=\mathds{1}$.
iv) Lastly, $P-P^T=F_+-F_+^T=F$.
\end{proof}

\section{Macroscopic circulation on graphs}
\label{sec:circulation}

Consider an $N\times N$ antisymmetric matrix $F$ of net fluxes that defines a divergence-free\footnote{If this is not the case, we replace $F$ by its restriction on the complement of the range of $\mathds{1}\mathds{1}^T$, namely, $(I-\frac{1}{N}\mathds{1}\mathds{1}^T )F(I-\frac{1}{N}\mathds{1}\mathds{1}^T )$, so that $F\mathds{1}=0$.} flow field on a (simple) graph $\mathcal G$. 
We seek a suitable definition of (maximal) {\em macroscopic circulation} by partitioning the states into three subsets $A$, $B$, and $C$, in such a way so as to maximize the flux between the parts.
We discuss first the simplest case, of three states, and proceed to define the concept of circulation and flow-density in general.

\subsubsection*{Three-state example}
We consider a three-state Markov chain ($N=3$) in Fig. \ref{fig:fluxes}, where for convenience we label the three nodes as $A,B,C$, i.e., $\mathcal V=\{A,B,C\}$.
The net-flux matrix on $\mathcal G$ (anti-symmetric part of the probability current matrix $P$, modulo a factor of $1/2$) is
\[
F =\left[\begin{matrix}0&-\gamma&\gamma\\\gamma&0&-\gamma\\-\gamma&\gamma&0\end{matrix}\right],
\]
with the directionality encoded in the sign of $\gamma$.
Obviously, the off-diagonal entries of F must have the same magnitude, since $F\mathds{1}=0$. Evidently, the value $|\gamma|$ quantifies circulation in this example.
The weighted oriented graph of net fluxes is shown in Fig. \ref{fig:netfluxes}.
For the case of a three-state Markov chain, a close-form expression for $\gamma$ can be obtained in terms of $\Pi$, though this is immaterial and not to be expected in general. \hfill $\Box$
\begin{figure}[ht]
\begin{center}
\begin{tikzpicture}[->,>=stealth',shorten >=1pt,thick,scale=.7]
\tikzset{edge/.style = {->,> = latex'}}
\node[state] (C) at  (0,0) {$C$};
\node[state] (A) at  (-2.3,-3.5) {$A$};
\node[state] (B) at  (2.3,-3.5) {$B$};

\Loop[dist=1.2cm,dir=NO,label=$p_{CC}$,labelstyle=above](C) 
\Loop[dist=1.2cm,dir=SOWE,label=$p_{AA}$,labelstyle=below left](A) 
\Loop[dist=1.2cm,dir=SOEA,label=$p_{BB}$,labelstyle=below right](B)  
\draw
(C) edge[bend right=10,auto=right,->] node {$p_{CB}\hspace*{-5pt}$} (B)  
(B) edge[bend right=10,auto=right,->] node {$p_{BC}$} (C)
(A) edge[bend right=10,auto=right,->] node {$p_{AB}$} (B)  
(B) edge[bend right=10,auto=right,->] node {$p_{BA}$} (A)
(C) edge[bend right=10,auto=right,->] node {$p_{CA}$} (A)  
(A) edge[bend right=10,auto=right,->] node {\hspace*{-5pt}$p_{AC}$} (C);  

\end{tikzpicture}
\vspace*{.2in}
\caption{Probability currents in a 3-state Markov chain.}\label{fig:fluxes}
\end{center}
\end{figure}

\begin{figure}[ht]
\begin{center}
\begin{tikzpicture}[scale=.7]
\tikzset{edge/.style = {->,> = latex'}}
\node[state] (C) at  (0,0) {C};
\node[state] (A) at  (-2.3,-3.5) {A};
\node[state] (B) at  (2.3,-3.5) {B};

\draw
(A) edge[bend left=0,auto=left,->] node {$p_{AC}-p_{CA}=\gamma$} (C)
(C) edge[bend left=0,auto=left,->] node {$\gamma=p_{CB}-p_{BC}$} (B)
(B) edge[bend left=0,auto=left,->] node {$\gamma=p_{BA}-p_{AB}$} (A);  

\end{tikzpicture} 
\vspace*{.2in}
\caption{Weighted oriented graph of net fluxes.}\label{fig:netfluxes}
\end{center}
\end{figure}

%

%

\subsubsection*{General case}
We now consider the general case with $N$ states, as before, and data for net-flux between nodes in $F$.
We seek partitioning the graph into three subsets of nodes 
\[
\mathcal A,\mathcal B, \mathcal C\subset \mathcal V
\]
that are pairwise non-intersecting with
\[
\mathcal V=\mathcal A\cup \mathcal B\cup \mathcal C.
\]
Such a triple of subsets of $\mathcal V$ will be referred to as a 3-partition.

Define the characteristic (column) vector $I_{\mathcal S}$ of a set $\mathcal S\subseteq \mathcal V$, with $\mathcal V$ ordered, as follows: the $v^{th}$ entry of $I_{\mathcal S}$ is equal to $1$ when $v\in\mathcal S$ and $0$ otherwise. It is convenient to define entry-wise Boolean addition and multiplication of characteristic vectors, $\oplus$ and $\circ$, respectively, and also the notation $\mathds{1},\mathbf{0}$ to denote the (column) vectors with all $1$'s and all $0$'s, respectively.
\begin{lemma}
 $(\mathcal A,\mathcal B, \mathcal C)$ is a 3-partition of $\mathcal V$ if and only if
\begin{subequations}
\begin{align}\label{eq:sum}
&I_\mathcal A\oplus I_\mathcal B\oplus I_\mathcal C=\mathds{1},\\\label{eq:pairwiseproduct}
&I_\mathcal A\circ I_\mathcal B=I_\mathcal B\circ I_\mathcal C=I_\mathcal C\circ I_\mathcal A=\mathbf{0},
\end{align}
\end{subequations}
\end{lemma}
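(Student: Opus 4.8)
The plan is to argue coordinate-wise, since both the Boolean operations $\oplus,\circ$ and the defining conditions of a 3-partition act independently on each vertex $v\in\mathcal V$. Fix $v$ and abbreviate $a=I_{\mathcal A}(v)$, $b=I_{\mathcal B}(v)$, $c=I_{\mathcal C}(v)$, each lying in $\{0,1\}$. The assertion ``$v$ belongs to exactly one of $\mathcal A,\mathcal B,\mathcal C$'' is then literally the statement ``exactly one of $a,b,c$ equals $1$,'' and a 3-partition is, by definition, precisely the requirement that this hold for every $v$ (pairwise disjointness is ``no two of $a,b,c$ equal $1$'' and $\mathcal V=\mathcal A\cup\mathcal B\cup\mathcal C$ is ``at least one of $a,b,c$ equals $1$'').

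For the forward implication, assume $(\mathcal A,\mathcal B,\mathcal C)$ is a 3-partition and fix $v$. Since $\mathcal V=\mathcal A\cup\mathcal B\cup\mathcal C$, at least one of $a,b,c$ is $1$; since the parts are pairwise disjoint, no two of them are simultaneously $1$. Hence exactly one equals $1$, which gives $a\oplus b\oplus c=1$ and $a\circ b=b\circ c=c\circ a=0$. As $v$ was arbitrary, this is exactly \eqref{eq:sum} and \eqref{eq:pairwiseproduct}.

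For the converse, assume \eqref{eq:sum} and \eqref{eq:pairwiseproduct} and fix $v$. From \eqref{eq:pairwiseproduct}, $a\circ b=b\circ c=c\circ a=0$, so at most one of $a,b,c$ is nonzero; since this holds at every coordinate, $\mathcal A,\mathcal B,\mathcal C$ are pairwise disjoint. From \eqref{eq:sum}, $a\oplus b\oplus c=1$, and because the Boolean sum of zeros is zero, at least one of $a,b,c$ equals $1$; hence $\mathcal A\cup\mathcal B\cup\mathcal C=\mathcal V$. Combining the two observations, each $v$ lies in exactly one part, so $(\mathcal A,\mathcal B,\mathcal C)$ is a 3-partition.

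There is essentially no obstacle here; the only subtlety worth flagging is the asymmetry of the Boolean algebra: because $1\oplus 1=1$, condition \eqref{eq:sum} alone does \emph{not} encode disjointness, which is exactly why \eqref{eq:pairwiseproduct} is genuinely needed rather than redundant. It may be worth closing with the remark that the two conditions together are equivalent to the single ordinary-arithmetic identity $I_{\mathcal A}+I_{\mathcal B}+I_{\mathcal C}=\mathds{1}$, with sum and equality taken entrywise over $\mathbb{Z}$, since that identity forces each coordinate of the left-hand side to equal $1$, i.e. ``at least one'' and ``at most one'' simultaneously.
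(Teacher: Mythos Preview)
Your argument is correct and follows essentially the same approach as the paper: the paper's proof simply asserts that \eqref{eq:sum} is equivalent to $\mathcal V=\mathcal A\cup\mathcal B\cup\mathcal C$ and \eqref{eq:pairwiseproduct} to pairwise disjointness, while you spell out the coordinate-wise verification of these two equivalences in full. Your closing remarks on why \eqref{eq:pairwiseproduct} is not redundant and on the equivalent integer identity $I_{\mathcal A}+I_{\mathcal B}+I_{\mathcal C}=\mathds{1}$ are nice additions not present in the paper.
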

\begin{proof} Relation \eqref{eq:sum} is equivalent to $\mathcal V=\mathcal A\cup \mathcal B\cup \mathcal C$. Then, \eqref{eq:pairwiseproduct} is equivalent to the pair-wise non-intersection condition.
\end{proof}
Given a flow field (net-probability flux) matrix $F$, as before, and a 3-partition $(\mathcal A,\mathcal B, \mathcal C)$ of the (ordered) vertex set $\mathcal V$, then
$I_{\mathcal A}^TFI_{\mathcal B}$
is the (signed) flux directed from $\mathcal A$ to $\mathcal B$. That is, if $I_{\mathcal A}^TFI_{\mathcal B}<0$, the net-flux, summed over all edges connecting directly $\mathcal A$ and $\mathcal B$, is directed from $\mathcal B$ into $\mathcal A$. Thus,
\[I_{\mathcal A}^TFI_{\mathcal B}=-I_{\mathcal B}^TFI_{\mathcal A},
\]
while the absolute value $|I_{\mathcal A}^TFI_{\mathcal B}|$ is the total net-flux between the two parts.

\begin{lemma}
For any 3-partition $(\mathcal A,\mathcal B, \mathcal C)$ of the (ordered) vertex set $\mathcal V$,
\[
I_{\mathcal A}^TFI_{\mathcal B}=I_{\mathcal B}^TFI_{\mathcal C}=I_{\mathcal C}^TFI_{\mathcal A}.
\]
\end{lemma}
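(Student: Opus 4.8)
The plan is to exploit the two defining properties of $F$ from \eqref{eq:F1} and \eqref{eq:F2}, namely $F=-F^T$ and $F\mathds{1}=\mathbf 0$, together with the partition identity $I_{\mathcal A}+I_{\mathcal B}+I_{\mathcal C}=\mathds{1}$ (which holds as an ordinary vector sum, not merely Boolean, because the parts are disjoint). First I would observe that $F\mathds{1}=\mathbf 0$ implies $F(I_{\mathcal A}+I_{\mathcal B}+I_{\mathcal C})=\mathbf 0$, so that $FI_{\mathcal C}=-FI_{\mathcal A}-FI_{\mathcal B}$. Substituting this into $I_{\mathcal B}^TFI_{\mathcal C}$ gives
\[
I_{\mathcal B}^TFI_{\mathcal C}=-I_{\mathcal B}^TFI_{\mathcal A}-I_{\mathcal B}^TFI_{\mathcal B}.
\]
The term $I_{\mathcal B}^TFI_{\mathcal B}$ vanishes because $F$ is antisymmetric (any quadratic form $x^TFx$ of an antisymmetric matrix is zero), and $-I_{\mathcal B}^TFI_{\mathcal A}=I_{\mathcal A}^TFI_{\mathcal B}$ again by antisymmetry. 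Hence $I_{\mathcal B}^TFI_{\mathcal C}=I_{\mathcal A}^TFI_{\mathcal B}$.

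The remaining equality $I_{\mathcal C}^TFI_{\mathcal A}=I_{\mathcal A}^TFI_{\mathcal B}$ follows by the identical argument with the roles of the three parts cyclically permuted: write $I_{\mathcal A}^TFI_{\mathcal B}=-I_{\mathcal C}^TFI_{\mathcal B}-I_{\mathcal B}^TFI_{\mathcal B}=-I_{\mathcal C}^TFI_{\mathcal B}=I_{\mathcal B}^TFI_{\mathcal C}$, and likewise one more rotation closes the cycle. Since the computation is short and symmetric, I would probably present it just once and then say ``by symmetry'' or ``cyclically'' for the other instance, or alternatively chain all three through a single sequence of substitutions.

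There is no real obstacle here; the only point requiring a moment's care is the distinction between the Boolean sum $\oplus$ used in the previous lemma and the ordinary arithmetic sum needed in this proof. Because $\mathcal A,\mathcal B,\mathcal C$ are pairwise disjoint, the two coincide entrywise, so $I_{\mathcal A}+I_{\mathcal B}+I_{\mathcal C}=\mathds{1}$ as integer vectors — it is worth stating this explicitly so the reader sees why $F\mathds{1}=\mathbf 0$ can be applied. Everything else is two invocations of antisymmetry (to kill the diagonal form $I_{\mathcal B}^TFI_{\mathcal B}$ and to flip $I_{\mathcal B}^TFI_{\mathcal A}$ into $I_{\mathcal A}^TFI_{\mathcal B}$), so the proof is essentially three lines.
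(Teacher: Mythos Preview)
Your proof is correct and follows essentially the same approach as the paper: both arguments combine the partition identity $I_{\mathcal A}+I_{\mathcal B}+I_{\mathcal C}=\mathds{1}$ with $F\mathds{1}=\mathbf 0$ to relate two of the three bilinear forms, then invoke antisymmetry of $F$ to kill the diagonal term and flip the sign, finishing the remaining equality by symmetry. Your explicit remark that disjointness makes the Boolean and arithmetic sums coincide is a nice clarification the paper leaves implicit.
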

\begin{proof}
Note that $I_{\mathcal A}^TFI_{\mathcal A}=0$, since $F$ is antisymmetric, and that $I_{\mathcal A}\oplus I_{\mathcal B\cup \mathcal C}=I_{\mathcal A\cup\mathcal B\cup \mathcal C}=\mathds{1}$. Then,
\begin{align*}
I^T_{\mathcal A}FI_{\mathcal B} + I^T_{\mathcal A}FI_{\mathcal C} &=I^T_{\mathcal A}FI_{\mathcal B\cup \mathcal C}\\
&=I^T_{\mathcal A}FI_{\mathcal B\cup \mathcal C} +I_{\mathcal A}^TFI_{\mathcal A}\\
&=I^T_{\mathcal A}F\mathds{1}=0.
\end{align*}
Thus,
\begin{align*}
I^T_{\mathcal A}FI_{\mathcal B} &=- I^T_{\mathcal A}FI_{\mathcal C}\\
&=\phantom{-}I^T_{\mathcal C}FI_{\mathcal A}.
\end{align*}
And, similarly, $I^T_{\mathcal B}FI_{\mathcal A} = I^T_{\mathcal C}FI_{\mathcal B}$.
\end{proof}

Now, one is naturaly led to define the circulation
\[
c(\mathcal A,\mathcal B, \mathcal C):=|I_{\mathcal A}^TFI_{\mathcal B}|
\]
associated to any given 3-partition and, accordingly, the maximal macroscopic circulation
\[
c_{\rm max}:=\max_{\mbox{3-partitions}} c(\mathcal A,\mathcal B, \mathcal C).
\]
Evidently, $c(\mathcal A,\mathcal B, \mathcal C)$ depends on the partition as well as the ``divergence-free'' flow field on the graph $\mathcal{G}=(\mathcal V,\mathcal E)$ that is specified by the skew symmetric matrix $F$. Herein, we prefer to let $F$ be specified from the context, instead of using a more cumbersome notation such as $c_F(\mathcal A,\mathcal B, \mathcal C)$.

A moment's reflection reveals that these concepts do not take into account the topology of the partition. More specifically, the nature and size of the boundary between the parts of the partition may be relevant to the type of global feature we may want to capture. We highlight this point with the following example, and then return to define normalized notions of macroscopic circulation.

\begin{figure}
\centering
\begin{tikzpicture}
\tikzset{vertex/.style = {shape=circle,draw,minimum size=0.5em}}
\tikzset{edge/.style = {->,> = latex'}}
\node[vertex, fill=green!60] (1) at  (-0.75,1.299) {1};
\node[vertex, fill=red!60] (2) at  (0.75,1.299) {2};
\node[vertex, fill=blue!60] (3) at  (1.5,0) {3};
\node[vertex, fill=green!60] (4) at  (0.75,-1.299) {4};
\node[vertex, fill=red!60] (5) at (-0.75,-1.299) {5};
\node[vertex, fill=blue!60] (6) at (-1.5,0) {6};

\draw[->] (1) to (2);
\draw[->] (2) to (3);
\draw[->] (3) to (4);
\draw[->] (4) to (5);
\draw[->] (5) to (6);
\draw[->] (6) to (1);
\end{tikzpicture}
\vspace*{.25in}
\caption{3-partition into non-contiguous pairs.}
\label{fig:HexGraph a}
\vspace*{.25in}
\begin{tikzpicture}
\tikzset{vertex/.style = {shape=circle,draw,minimum size=0.5em}}
\tikzset{edge/.style = {->,> = latex'}}
\node[vertex, fill=green!60] (1) at  (-0.75,1.299) {1};
\node[vertex, fill=green!60] (2) at  (0.75,1.299) {2};
\node[vertex, fill=red!60] (3) at  (1.5,0) {3};
\node[vertex, fill=red!60] (4) at  (0.75,-1.299) {4};
\node[vertex, fill=blue!60] (5) at (-0.75,-1.299) {5};
\node[vertex, fill=blue!60] (6) at (-1.5,0) {6};

\draw[->] (1) to (2);
\draw[->] (2) to (3);
\draw[->] (3) to (4);
\draw[->] (4) to (5);
\draw[->] (5) to (6);
\draw[->] (6) to (1);
\end{tikzpicture}  
\vspace*{.25in}
\caption{3-partition into contiguous pairs}
\label{fig:HexGraph b}
\vspace*{.25in}
\begin{tikzpicture}
\tikzset{vertex/.style = {shape=circle,draw,minimum size=0.5em}}
\tikzset{edge/.style = {->,> = latex'}}
\node[vertex, fill=blue!60] (1) at  (-0.75,1.299) {1};
\node[vertex, fill=red!60] (2) at  (0.75,1.299) {2};
\node[vertex, fill=green!60] (3) at  (1.5,0) {3};
\node[vertex, fill=green!60] (4) at  (0.75,-1.299) {4};
\node[vertex, fill=green!60] (5) at (-0.75,-1.299) {5};
\node[vertex, fill=green!60] (6) at (-1.5,0) {6};

\draw[->] (1) to (2);
\draw[->] (2) to (3);
\draw[->] (3) to (4);
\draw[->] (4) to (5);
\draw[->] (5) to (6);
\draw[->] (6) to (1);
\end{tikzpicture}  
\vspace*{.25in}
\caption{ 3-partition into contiguous pairs}
\label{fig:HexGraph c}

\end{figure}
\subsubsection*{Six-state example}
Consider hexagonal planar graph in Figures \ref{fig:HexGraph a} and \ref{fig:HexGraph b}.
The two figures display color-coded 3-partitions, where in the first, pairs that constitute each of the three parts are not contiguous, whereas in the second, pairs of nodes in each of the three parts are neighboring and connected. The net-flux matrix $F$ is 
\begin{equation}
F =\left[\begin{matrix}0&1&0&0&0&-1\\-1&0&1&0&0&0\\0&-1&0&1&0&0\\0&0&-1&0&1&0\\0&0&0&-1&0&1\\1&0&0&0&-1&0\end{matrix}\right]
\label{eq:hexGraphNetFluxMatrix}
\end{equation}
Let ${\mathcal A}$ denote the green set of nodes and ${\mathcal B}$ the red.
In the first of these two 3-partitions, $I^T_{\mathcal A}FI_{\mathcal B}=2$, whereas in the
second $I^T_{\mathcal A}FI_{\mathcal B}=1$. The difference between the two is that
the ``circulatory flux'' in the first 3-partition is counted {\em twice} due to the fact that the boundary between the parts ${\mathcal A}$ and ${\mathcal B}$, denoted by $\partial^{\mathcal A}_{\mathcal B}$ has size $2$, as it consists of two edges, and is traversed ``twice'' by any complete transport around a cycle. In the second choice for a 3-partition, $\partial^{\mathcal A}_{\mathcal B}=1$. 
\hfill $\Box$
%
%

The above examples suggest normalizing the flux between any two parts of a partition, by dividing by the size of the corresponding boundaries. That is, normalizing $I^T_{\mathcal A}FI_{\mathcal B}$ by dividing with the size of the boundary, namely,  $|\partial^{\mathcal A}_{\mathcal B}|$ which denotes the cardinality of the set of edges between the two parts ${\mathcal A}$, ${\mathcal B}$,
brings us to a notion of {\em density flux} associated with the boundary separating two parts of any partition,
\[
f(\partial^{\mathcal A}_{\mathcal B}):=\frac{|I_{\mathcal A}^TFI_{\mathcal B}|}{|\partial^{\mathcal A}_{\mathcal B}|}.
\]
Accordingly, the minimal density flux of the partition, is
\begin{align}
f_{\rm min}(\mathcal A,\mathcal B,\mathcal C)&:=\min\{f(\partial^{\mathcal A}_{\mathcal B}),
f(\partial^{\mathcal B}_{\mathcal C}),
f(\partial^{\mathcal C}_{\mathcal A})\},
\label{secondmethod}
\end{align}
and similarly for the maximal. This approach leads us to a combinatorial problem. In fact, for a graph with $n$ vertices, the total number of possible cases to consider for solving \eqref{secondmethod} is equal to Stirling number of the second kind, $\it{S}(n,3)$ \cite{Stirling2015}.

We are interested in circulation defined on 3-partitions of a graph with partitions having admissible boundaries. We will focus on planar graphs and explain how to compute suitable notions of macroscopic circulation via a scalar potential supported on the dual graph. As an example, a partitioning for the hexagonal planar graph has been shown in Fig. \ref{fig:HexGraph c}, in which all of the partitions are connected. We will show that in general under some
assumptions this method guarantees connectivity of the partitions.

\section{Planar graphs and network circulation}\label{sec:planarity}

We assume that geographic proximity of nodes is dictated by an actual embedding of a graph into a linear metric space, specifically $\mathbb R^2$. Graphs that can be embedded in $\mathbb R^2$, without intersection of edges, are called {\em planar}. In this case flow fields have astrong resemblance to planar vector fields. 

For planar vector fields there is a well known decomposition into gradient flow and curl that captures circulation. In fact, circulation can be conveniently quantified by a scalar potential. In a similar manner, for planar graphs, circulation relates to a scalar potential on the vertex set of a dual graph as we will explain shortly.\footnote{The dual graph $\mathcal G^{*}$ of a planar graph $\mathcal G$ is a planar graph that each of its vertices corresponds to a face of $\mathcal G$ and each of whose faces corresponds to a graph vertex of $\mathcal G$. Two nodes in $\mathcal G^{*}$ are connected by an edge if the corresponding faces in $\mathcal G$ have an edge as a boundary.}
We proceed to review some facts on planar graphs, as well as elements of the Helmholtz-Hodge decomposition of vector fields that provides insight into the corresponding decomposition of flow fields on graphs.\\[.2in]
\vspace{-1cm}
\subsection{Planar graphs}

A graph is referred to as {\em planar} when it can be drawn on the plane in a way that no edges cross each other and intersect only at their endpoints (vertices). The study of planar graphs goes back to Euler who showed that, for simple and connected planar graphs on the simply connected space $\mathbb{R}^2$, i.e., with zero genus $g$, the Euler characteristic\footnote{Euler characteristic is a topological invariant, a number that describes a topological space's shape or structure regardless of the way it is bent.} $\chi(g)=2$ in (\ref{eq:euler})
\begin{equation} \label{eq:euler}
\begin{split}
|\mathcal V|-|\mathcal E|+|\mathcal F| & = \chi(g) \\
\chi(g)& =2-2g
\end{split}
\end{equation}
where $\mathcal F$ is the face{\footnote{The exterior of the graph needs to be counted as a face, and it is referred to as the outside face.} set. Interestingly,  Euler formula is not sufficient to ensure planarity. A condition that fully characterizes planarity was given in 1930's by Kuratowski and Wagner in the form of absence of two specific subgraphs, $\mathit{K}_5$ or $\mathit{K}_{3,3}$ \cite{Swami2014,Kuratowski1930,Wagner1937,PlanarityDuality1980}.}

The next important consideration is how to embed a planar graph in $\mathbb R^2$. For this we refer to \cite{nemhauser1999integer}. It turns out that there may exist several ``nonequivalent embeddings'' \cite{whitney1992congruent,isomorphicembeddings2003}. Interestingly, as we will see, network circulation depends on the particular embedding.

Every planar graph can be drawn on a sphere (and vice versa) via sterographic projection. This amounts to identifying points on $\mathbb R^2$ with points on the (Riemann) sphere by corresponding the ``north pole'' with the ``point at $\infty$'' and any other pair in line with the north pole (the line containing a point on the sphere and the corrsponding projection on $\mathbb R^2$).

For any planar graph $\mathcal G$, and any face $\mathit f$ of $\mathcal G$, the graph can be redrawn on the plane in such a way that $\mathit f$ is the ``outside face'' of $\mathcal G$. 
This can be effected by rotating the projection of the graph onto the Riemann sphere so that the image of the face contains the north pole. But, besides sliding and rotating the graph projection on the Riemann sphere, other transformations are possible that may change the local ordering of vertices, leading to non-equivalent embeddings of the graph on $\mathbb R^2$.

More precisely, for our purposes, two graph embeddings are said to be {\em equivalent} if their corresponding projections onto the sphere can be continuously rotated (and the corresponding vertices shifted onto the sphere without crossing edges) so as to match. The equivalence of two graphs is exemplified in Fig. \ref{fig:IsomorphicGraphs}. For the reasons we just explained, that the positioning of the north pole leads to equivalent embeddings, there are $m = |\mathcal F|$ isomorphic embeddings for every planar graph, as stated next.

\begin{figure}[H]
	\centering
	\includegraphics[scale=0.24]{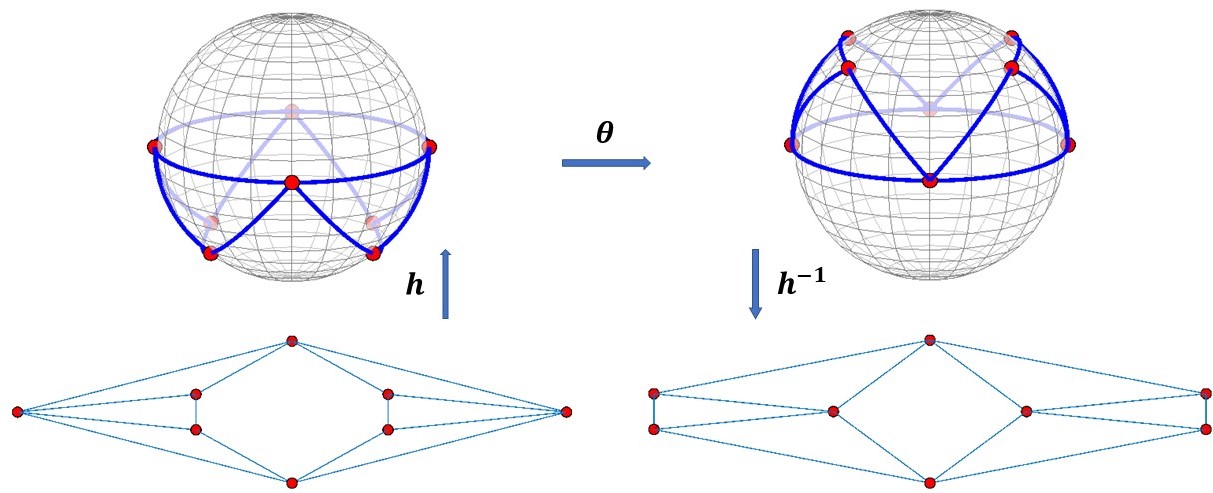}
	\vspace{0.2in}
	\caption{Isomorphic graphs and sequence of graph morphisms; $\theta$ and $\mathit{h}$ are rotation and projection maps, respectively.}
	\label{fig:IsomorphicGraphs}
\end{figure}
\begin{lemma}
There are $m = |\mathcal F|$ isomorphic embeddings for every planar graph.\\[-.35in]
\end{lemma}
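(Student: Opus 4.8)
The plan is to make precise the informal argument already sketched in the paragraph preceding the lemma: that the only freedom in embedding a planar graph on the sphere (up to the declared notion of equivalence) is the choice of which face becomes the ``outside face'' when one projects back down to $\mathbb R^2$. First I would fix an arbitrary embedding of the planar graph $\mathcal G$ on the Riemann sphere $S^2$; by the stereographic correspondence this is the same data as an embedding in $\mathbb R^2$ together with a choice of outside face (the face containing the north pole, i.e.\ the point at infinity). I would then define a map from the set of planar embeddings of $\mathcal G$ in $\mathbb R^2$ to the face set $\mathcal F$, sending an embedding to its outside face, and show this map is well defined on equivalence classes and surjective.

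The key steps, in order: (i) Recall that every planar $\mathcal G$ may be drawn on $S^2$, and that given any face $\mathit f$ of $\mathcal G$ one can rotate the spherical picture so that the north pole lies in the region bounded by $\mathit f$; stereographic projection then yields a planar drawing in which $\mathit f$ is the outside face. This shows the ``outside face'' map is surjective onto $\mathcal F$, so there are \emph{at least} $m = |\mathcal F|$ embeddings that are pairwise inequivalent (two drawings with different outside faces cannot be matched by a rotation of the sphere, since a sphere rotation carries faces to faces and in particular carries the distinguished north-pole face to the north-pole face of the image). (ii) Conversely, show there are \emph{at most} $m$ classes: any two spherical embeddings of $\mathcal G$ with the \emph{same} combinatorial structure are related by a homeomorphism of $S^2$, hence — after an isotopy, invoking the fact that orientation-preserving homeomorphisms of $S^2$ are isotopic to the identity and that we may further adjust by a reflection — by a rigid rotation, and their projections into $\mathbb R^2$ differ only by which face sits at infinity; so the equivalence class of a planar embedding is determined precisely by its outside face. (iii) Combine (i) and (ii): the outside-face map induces a bijection between equivalence classes of embeddings and elements of $\mathcal F$, giving exactly $m = |\mathcal F|$ classes.

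The main obstacle — and the point where the argument is genuinely delicate rather than routine — is step (ii), the ``at most $m$'' direction. It rests on the statement that the combinatorial (rotation-system) data of a $2$-cell embedding on $S^2$ determines the embedding up to homeomorphism of the sphere, together with the topological fact that the relevant homeomorphism can be promoted to a rotation. This is essentially the uniqueness-of-embedding theorem for $2$-connected (or $3$-connected, à la Whitney) planar graphs, and for graphs that are not sufficiently connected one genuinely does get extra inequivalent embeddings unless the notion of equivalence is set up to quotient them out; so I would either cite the Whitney/Imrich uniqueness result (the references \cite{whitney1992congruent,isomorphicembeddings2003} are already in the paper) and restrict to the $3$-connected case, or make explicit that ``equivalent'' is being taken in the strong sense of ``projections match after a sphere rotation,'' in which case the counting of classes by outside face is immediate once surjectivity is established. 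In the write-up I would lean on the latter reading, since that is the definition the paragraph just above the lemma actually gives, and relegate the subtleties about non-$3$-connected graphs to a remark.
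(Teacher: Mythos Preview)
You have misread what the lemma is asserting. The phrase ``$m=|\mathcal F|$ isomorphic embeddings'' does not mean ``$m$ distinct equivalence classes of embeddings''; it means ``$m$ planar drawings, all of which are isomorphic (equivalent) to one another.'' The paragraph immediately preceding the lemma says this explicitly: ``the positioning of the north pole leads to \emph{equivalent} embeddings.'' The paper's proof accordingly does nothing more than exhibit, for each rotation $\theta$ of $S^2$, the composite $h^{-1}\circ\theta\circ h$ and observe that it is an isomorphism between the two induced planar drawings. The count $m$ comes for free: there are $|\mathcal F|$ choices of which face contains the north pole, and the proof shows all of the resulting drawings lie in a single equivalence class.

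Your step~(i) therefore contains a genuine error. You claim that two drawings with different outside faces ``cannot be matched by a rotation of the sphere, since a sphere rotation carries \ldots\ the distinguished north-pole face to the north-pole face of the image.'' But a rotation of $S^2$ need not fix the north pole; indeed, the whole point of the construction is that a suitable rotation moves the north pole from one face to another, and that is precisely why the two drawings \emph{are} equivalent under the paper's definition. Your steps (ii)--(iii), which aim to cap the number of equivalence classes at $m$, are attacking a statement the lemma never makes --- and one that is false in general, as the paper itself stresses in Section~\ref{sec:embedding}, where non-$3$-connected graphs are shown to admit genuinely non-equivalent embeddings with non-isomorphic duals. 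So the ``at most $m$'' direction is both unnecessary here and, without a $3$-connectivity hypothesis, unprovable.
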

\begin{proof}
A sphere is an oriented surface with a given orientation, the ``inside'' and the ``outside.'' When a graph $\mathcal G$ is projected on a sphere, it is embedded on the ``outside.'' Assume a graph $\mathcal G$ with a given embedding, i.e., an orientation system $\mathcal{R}$, $\mathcal{R}:v \rightarrow (e_1, ...,e_k)$, where $v$ is a vertex and $e_i$'s are the incident edges. The projection $\mathit{h} : \mathbb{R}^2 \rightarrow S^2$ is an isomorphism that maps $\mathcal G$ to $\mathcal G_s$ on sphere $S^2$. If $\mathcal G_s$ is then viewed from the inside, the rotation system, $\mathcal{R}$, will be reversed.
Now, assume a homeomorphism $\theta : S^2 \rightarrow S^2$ that rotates $\mathcal G_s$ on the outside of $S^2$. Therefore, $\mathit{h}^{-1} \circ \theta \circ \mathit{h}$ is an isomorphism.
\end{proof}

\subsection{Helmholtz-Hodge decomposition}
\label{sec:potentialCalculation}

We now turn to a brief overview of concepts of vector fields.
The Helmholtz-Hodge decomposition simplifies the analysis by bringing up important properties such as incompressibility and vorticity that can thereby be studied directly
\cite{Helmholtz1858,Hodge1941}.

According to the \textit{Helmholtz-Hodge decomposition Theorem}, the space of a vector field can be uniquely decomposed into mutually $\textit{L}^2$-orthogonal sub-spaces using potential functions \cite{bhatia2012helmholtz}.
These components can be calculated as the gradient of a scalar potential $\phi$ and curl of a vector potential \boldsymbol{$\psi$}, namely,
\[
\mathcal{W}=\mathrm{\nabla }\phi +\mathrm{\nabla }\times \boldsymbol{\psi}+h,
\] 
where $\mathrm{\nabla }\phi $ is the curl-free component (i.e., $\mathrm{\nabla }\times\mathrm{\nabla }\phi=0$) of the vector field $\mathcal{W}$, and $\mathrm{\nabla }\times \boldsymbol{\psi}$ is its divergence-free component (i.e., $\mathrm{\nabla}\cdot \mathrm{\nabla }\times \boldsymbol{\psi}=0$), whereas the harmonic component $h$ is both divergence-free and curl-free.
 
\noindent 
\sbprgraph{\textbf{The curl-free component:}}
\noindent Since the divergence of a curl is zero, we can compute $\phi$, the scalar potential associated with the curl-free component of the vector field $\mathcal{W}$, as the solution of the following Poisson equation.
\[\mathrm{\nabla }\cdot \mathcal{W}=\mathrm{\nabla }\mathrm{\cdot }\mathrm{\nabla }\phi ={\mathrm{\nabla }}^2\phi \]
\noindent
where the last equality holds because the divergence of a gradient is the Laplacian.

\noindent 
\sbprgraph{\textbf{The divergence-free component:}}
\noindent Since the vorticity (normal component of the surface curl) of a gradient field vanishes, the following identity holds: $\hat{n}\ .\ (\mathrm{\nabla }\times \mathcal{W})=\hat{n}\ .\ (\mathrm{\nabla} \times \psi )$. Using the fact that vorticity of the surface curl of a scalar potential is just the surface Laplacian of the potential, we have
\begin{equation} \label{Poisson psi}
\hat{n}\ .\left(\mathrm{\nabla } \times \mathcal{W}\right)=\Delta \psi \
\end{equation}
\noindent
where $\hat{n}$ is the normal vector. Equation (\ref{Poisson psi}) is obeyed if the scalar field $\psi$ is a solution of the above Poisson equation.

In the case of vector fields on $\mathbb R^2$, the curl can be expressed as $\Psi=J\nabla \psi$, where $J$ is an antisymmetric matrix and $\psi$ a scalar potential.
It follows (Stokes' theorem) that the flux crossing any curve connecting two points $a$ and $b$ on $\mathbb R^2$ is given by the difference of the endpoint potentials. As a consequence we have the following.

\begin{thm} \label{thm:maxFluxPotential}
The flux across the path connecting the extrema of a curl potential field is maximum.\\[-.35in]
\end{thm}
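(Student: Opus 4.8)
The theorem claims: for a curl potential field $\Psi = J\nabla\psi$ on $\mathbb{R}^2$ (where $J$ is the antisymmetric rotation matrix), the flux across a path connecting two points is the difference of the scalar potential $\psi$ at the endpoints; hence the flux is maximized when the path connects the extrema (the global maximum and minimum) of $\psi$. So I need to (i) establish the line-integral/flux identity, and (ii) observe that maximizing the potential difference over all pairs of endpoints is achieved at the argmax and argmin of $\psi$.

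**Proof plan.** The plan is to compute directly the flux of the field $\Psi = J\nabla\psi$ across an arbitrary smooth curve $\Gamma$ from $a$ to $b$. Parametrize $\Gamma$ by arclength, let $\hat t$ be the unit tangent and $\hat n = J^{-1}\hat t$ (equivalently $\hat n = -J\hat t$, the $90^\circ$-rotated normal) the unit normal along which flux is measured. The flux is
\[
\int_\Gamma \Psi\cdot\hat n\,ds = \int_\Gamma (J\nabla\psi)\cdot(-J\hat t)\,ds = -\int_\Gamma (J\nabla\psi)^T J\,\hat t\,ds = -\int_\Gamma \nabla\psi\cdot\hat t\,ds,
\]
using $J^TJ = I$ for the rotation matrix $J$. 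The last integral is just $\int_\Gamma \nabla\psi\cdot d\ell = \psi(b)-\psi(a)$ by the fundamental theorem of calculus for line integrals. Thus the flux across any path from $a$ to $b$ equals $\pm(\psi(b)-\psi(a))$, independent of the path — this is the Stokes'/path-independence statement alluded to in the text. Then, over all choices of endpoints $a,b$ in the domain, $|\psi(b)-\psi(a)|$ is maximized precisely when $\{a,b\} = \{\arg\max\psi,\ \arg\min\psi\}$, giving maximal flux equal to $\max\psi - \min\psi$.

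**Main obstacle and remarks.** The routine computation above is not the real difficulty; the delicate point is bookkeeping the sign/orientation conventions (which normal direction counts flux as positive, and the relation between $J$, $\hat t$, and $\hat n$) so that the identity reads cleanly as a potential difference. A secondary subtlety is the domain: path-independence requires that the region swept between competing paths carries no additional divergence-free contribution — on simply connected $\mathbb{R}^2$ this is automatic since $\Psi = J\nabla\psi$ is itself globally the curl of $\psi$, but one should note that the statement is implicitly about the curl component isolated by the Helmholtz–Hodge decomposition, so that the gradient and harmonic parts do not contribute to this flux. I would state these orientation conventions explicitly at the outset, then present the three-line computation, and close by noting that the extrema of $\psi$ give the maximizing endpoints. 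This also motivates the discrete analogue to follow: on a planar graph the scalar potential lives on the dual vertices (faces), and the same endpoint-difference principle will identify the 3-partition realizing maximal macroscopic circulation.
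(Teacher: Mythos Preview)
Your proposal is correct and follows essentially the same route as the paper: both compute the flux of $J\nabla\psi$ across a curve by using the orthogonality $J^TJ=I$ to reduce $\int (J\nabla\psi)\cdot(J\,ds)$ to $\int \nabla\psi\cdot ds=\psi(b)-\psi(a)$, and then observe that this is maximized at the extrema of $\psi$. Your version is slightly more explicit about the normal/tangent conventions and the resulting sign, and your remarks on simple connectivity and the role of the Helmholtz--Hodge decomposition are reasonable context, but the core argument is the same two-line computation the paper gives.
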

\begin{proof}
With $J$ the operator that rotates a vector on $\mathbb R^2$ counter-clockwise by ${\pi }/{2}$, the flux across any path linking a and b is
\begin{equation}
\hspace*{-5pt}I\mathrm{\ =\ }\int^b_a{J\mathrm{\nabla }\psi \ .\ Jds\mathrm{=\ }\int^b_a{\mathrm{\nabla }\psi \ .\ ds\mathrm{=\ }\psi \left(b\right)-\ \psi \left(a\right)}}.
\label{FluxBetweenAnB}
\end{equation}
Hence,
\begin{equation}
\max(I)\mathrm{\ =\ }\max_{a,b \in \mathbb{R}^2}(\psi(b)-\psi(a)).
\label{eq:MaxCirculation}
\end{equation}
\end{proof}

The analogue of Theorem \ref{thm:maxFluxPotential} over discrete vector fields on graphs is discussed in the next section.

\subsection{Planar Net-Flux Graph}
\label{markoviangraph}
Starting from an antisymmetric net flux matrix $F=[F_{ij}]_{i,j}$ in \eqref{netflux} of Markov chain on a planar graph, we consider the graph with adjacency matrix having the zero pattern of $F$; the space of vertices and edges are the collection of the nodes and edges, respectively, that have corresponding non-zero elements in $F$, 
\[\mathcal{V}_F=\left\{v_i\in \mathcal{V} \mid F_{ij}\neq 0\right\},\hspace{0.05in}  \mathcal{E}_F=\left\{e_{ij}\in \mathcal{E} \mid F_{ij}\neq 0\right\}.\]
\noindent
In addition we specify a sign function $\sigma : \mathcal{E}_F \times \mathcal{V}_F \rightarrow \{-1,1\}$ that assigns an orientation, specifically $\sigma = \mathrm{sign}(\mathrm{F_{ij}})$ for all non-zero elements of the net flux matrix, and define the digraph $\mathcal G_{F}(\mathcal{V}_{F},\mathcal{E}_F,\sigma)$. The vector of edge flow weights
\vspace{-0.2cm}
\[
\mathcal{W} = \left(w_{ij}\right)_{i,j}
\]
corresponding to edges $e_{ij}\in \mathcal{E}_F$ with values $w_{ij} = |F_{ij}|$
represents the flow field.
The space of all flow fields is denoted by $\mathcal{U}_{F}$
and assumes a Helmholtz-Hodge decomposition,
\vspace{-0.2cm}
\[\mathcal{U}_F = \mathcal{U}_F^{\rm curl}\oplus {\mathcal{U}_F^{\rm harmonic} \ }\oplus \mathcal{U}_F^{\rm gradient},
\]
where $\mathcal{U}_F^{\rm gradient}$ and $\mathcal{U}_F^{\rm curl}$ are curl-free and divergent-free components. If $\mathcal{W}_{\rm curl},\mathcal{W}_{\rm harmonic},\mathcal{W}_{\rm gradient}$ denote projections of $\mathcal{W}$ in the respective components, then clearly $\mathcal{W}_{\rm gradient}=\mathbf{0}$, since  by assumption $\mathrm{F}$ has no ``sources.''

We wish to capture circulation in a similar manner as in planar flow fields and thereby we seek a curl potential 
$\psi$. The harmonic component $\mathcal{W}_{\rm harmonic}$ relates to circulation about ``holes'' (non-triangular faces \cite{Lim2015}) in the graph.
Thus, before we proceed, we triangulate $\mathcal G_{F}$, and generate a new graph $\mathcal G_{F}^{\rm chordal}$ so as to remove holes and ensure that the harmonic component is zero.

\subsection{Triangular Planar Graph}
\label{subsec:GraphTriangulation}
The curl component of the flow field is defined on triangles, i.e., cycles of length 3 \cite{Lim2015}. The cycle graphs of more than 3 vertices are considered as holes and their edge flow as harmonic components of the flow field.  

To remove the harmonics, we replace the holes that are not bounded by triangles with chordal
subgraphs. That is, we add minimum number of chords, which are the edges with zero flux that are not part of the cycle but each connects two vertices of the cycle. This way, we generate a planar chordal graph such that every chordless cycle subgraph is a triangle. Fig. \ref{fig:triangulateggraphs} shows a triangulated graph with two different embeddings (original graphs are shown in Fig. \ref{fig:embedding}). It can be seen that graph triangulation strongly depends on the embedding. 

The corresponding potential function $\psi$ is now defined on the graph's faces, and hence, can be assigned to the nodes of the dual graph, $(\mathcal G^{\rm chordal}_{F})^*$. The maximum flux, in complete analogy with \eqref{eq:MaxCirculation}, is then obtained by identifying those vertices of the dual graph with minimum and maximum curl potentials.

\subsection{Non-planar Graphs}
Graphs that cannot be drawn on a plane or sphere without edge crossings, i.e., non-planar, can always be drawn on a surface of higher genus \cite{wilson1979introduction}.
A surface is said to be of genus $g\in\{0,1,2\ldots\}$ if it is topologically homeomorphic to a sphere with $g$ handles \cite{hatcher2005algebraic}. For instance, the genus of a sphere is $0$, and that of a torus is $1$.

Accordingly, we define a graph to be of genus $g$ if it can be drawn without crossings on a surface of genus $g$, but not on one of genus $g-1$. It can be easily seen that $\mathit K_{5}$ and $\mathit K_{3,3}$ are graphs of genus $1$ (toroidal graphs); Fig. \ref{fig:toroidalGraph} exemplifies $\mathit  K_{3,3}$ drawn on the torus $T^2$. The graph is drawn by assigning points and continuous, non-intersecting (except at end-points) paths corresponding to the vertices and edges of the graph, respectively. The dual of a non-planar graph can also be drawn on the torus in a similar fashion to a planar graph.

\begin{figure}[H]
	\centering
	\includegraphics[scale=0.25]{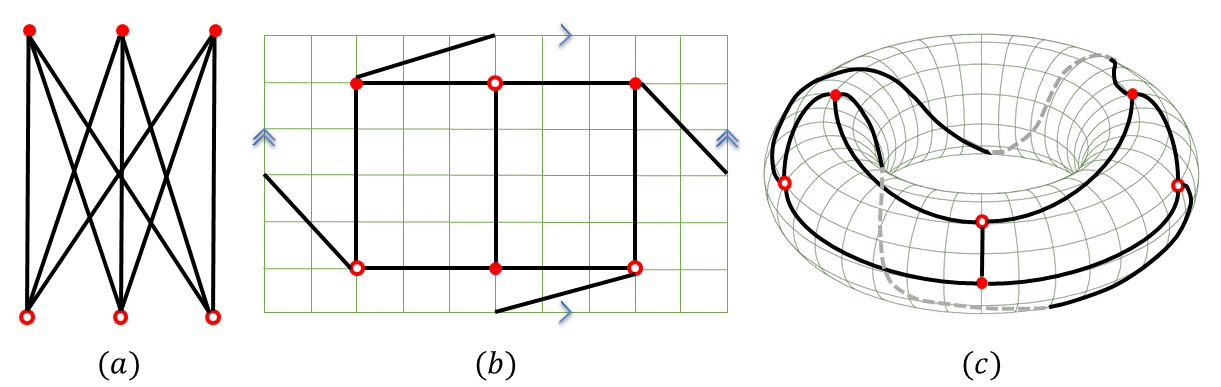}
	\vspace{0.2in}
	\caption{A toroidal graph ($\mathit{K}_{3,3}$): (a) Embedding on $\mathbb{R}^2$; (b) The projective plane; (c) Embedding on a torus.}
	\label{fig:toroidalGraph}
\end{figure}

The decomposition of a continuous vector field on the torus (as a manifold) is more complicated than on a sphere. In this case, the divergence-free component $\mathcal{U}_F^{\rm curl}$, can be partitioned into a toroidal and a poloidal part \cite{ToroidalPoloidal2019,ToroidalPoloidalBerger2018} --a restricted form of the usual Helmholtz decomposition.
We contend that an analogous decomposition of a flow field of genus $1$ graph can be similarly obtained; the corresponding poloidal component is once again generated by a scalar potential (as in the case of planar graphs) whereas the poloidal component is harmonic and represents flux/circulation around holes (of the torus). However, work on graph circulation on genus $1$ graphs is still in progress.

\section{Graph Partitioning}
\label{sec:partitioning}

We summarize the insights gained and highlight the steps needed in Algorithm \ref{algorithm} which helps obtain a 3-partition corresponding to maximum circulation by providing the curl potential $\psi$ on the vertices of its dual graph (i.e., faces of the original graph) and how to numerically calculate this.
The outcome depends on the embedding of $\mathcal G_{F}$, further discussed in Section \ref{sec:embedding}.

\begin{algorithm}
\textbf{Input:}\\
A strongly connected, aperiodic, planar, digraph $\mathcal G$ with a transition matrix $\Pi$.\\
\textbf{Offline Preprocessing:}\\
 1. Calculate the net flux matrix $F$ from \eqref{netflux}.\\
 2. Construct and triangulate $\mathcal G_{F}$ as described in Subsection \ref{markoviangraph}, to generate $\mathcal G^{\rm chordal}_{F}$.\\
 3. Find dual graph $(\mathcal G^{\rm chordal}_{F})^*$.\\
 4. Set the potential $\psi$ for the outside face to zero.\\
\textbf{Computations:}\\
Repeat ($m-1$) times:\\
 1. Find $\psi$ for vertices of the dual graph using \eqref{FluxBetweenAnB} (consider e.g., counter-clockwise as positive direction).\\
\textbf{Output:}\\
Two faces of the primal with potential extrema.
\caption{{\bf Finding curl potential extrema} }
\label{algorithm}
\end{algorithm}

Knowing $\psi$ allows carving 3-partitions that entail maximal circulation. Indeed, any set of two paths on the dual graph between the points of $\psi$-extrema separates the graph in the three regions, $A,B$ and $C$, discussed earlier.
This is summarized next.

\begin{thm}\label{maintheorem}
Consider a divergence-free flow field $\mathcal W$ on the  edges of a strongly connected digraph. Algorithm \ref{algorithm} generates the chordal digraph $\mathcal G_F^{\rm chordal}$ and its dual with an associated curl potential $\psi$. There exist paths in the dual graph connecting two chosen extrema points of $\psi$ that provide a 3-partition with maximal macroscopic circulation.\\
[-.3in]
\end{thm}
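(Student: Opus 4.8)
<br>

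The plan is to transfer the continuous Stokes-theorem argument of Theorem~\ref{thm:maxFluxPotential} to the discrete setting, where the dual graph $(\mathcal G_F^{\rm chordal})^*$ plays the role of $\mathbb R^2$ and the curl potential $\psi$ plays the role of the scalar potential in \eqref{eq:MaxCirculation}. First I would establish that the potential $\psi$ is well defined on the faces: the Computations loop of Algorithm~\ref{algorithm} assigns $\psi=0$ to the outside face and then propagates values across edges by the rule $\psi(f')-\psi(f)=F_{ij}$ (signed by the chosen counter-clockwise orientation) whenever faces $f,f'$ share the primal edge $e_{ij}$. One must check this is consistent, i.e.\ independent of the path of dual edges used to reach a face; this is exactly where triangulation enters. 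After replacing every non-triangular face by a chordal subdivision, every internal face is a triangle, and the divergence-free condition $\mathcal W_{\rm gradient}=\mathbf 0$ — equivalently $F\mathds 1=\mathbf 0$ read locally at each vertex — forces the sum of signed fluxes around each triangular face to vanish (each chord carries zero flux and so does not obstruct this). Since the dual of a triangulated planar graph has its cycle space generated by the elementary cycles around primal vertices, vanishing of the circulation around each such generator implies path-independence, so $\psi$ is single-valued after the $m-1$ updates. (The choice of which face is ``outside'' only shifts $\psi$ by a constant, consistent with Lemma~3 on equivalent embeddings.)

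Next I would set up the 3-partition from a pair of extremal faces. Let $f_{\min}$ and $f_{\max}$ be faces achieving the minimum and maximum of $\psi$. Pick two internally disjoint paths $\gamma_1,\gamma_2$ in the dual graph from $f_{\min}$ to $f_{\max}$; together $\gamma_1\cup\gamma_2$ is a cycle in the dual, hence (Jordan curve / planar duality) it corresponds to a cut in the primal graph that separates $\mathcal V_F$ into two nonempty pieces, and adding a third dual path $\gamma_3$ from $f_{\min}$ to $f_{\max}$ interior to one of those pieces cuts the primal into three regions $\mathcal A,\mathcal B,\mathcal C$. Equivalently: each dual path is a sequence of primal edges (the duals of the dual edges it traverses), and removing those primal edges disconnects $\mathcal G_F^{\rm chordal}$; three paths sharing the two endpoints $f_{\min},f_{\max}$ produce three ``wedges''. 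I would then verify, using Lemma~2, that the common value $I_{\mathcal A}^TFI_{\mathcal B}=I_{\mathcal B}^TFI_{\mathcal C}=I_{\mathcal C}^TFI_{\mathcal A}$ equals the total signed flux crossing any one of these dual paths, and that by the telescoping/discrete-Stokes identity — summing the defining relation $\psi(f')-\psi(f)=F_{ij}$ along the path — this flux is precisely $\psi(f_{\max})-\psi(f_{\min})$, the discrete analogue of \eqref{FluxBetweenAnB}.

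Finally I would argue maximality. For an \emph{arbitrary} admissible 3-partition $(\mathcal A',\mathcal B',\mathcal C')$ whose pairwise boundaries are (unions of) dual paths — which is exactly the class of partitions the algorithm is designed to produce, by the contiguity remark following \eqref{secondmethod} — the boundary $\partial^{\mathcal A'}_{\mathcal B'}$ lifts to a path (or disjoint union of paths) in the dual connecting some faces $f,f'$, and the same telescoping identity gives $I_{\mathcal A'}^TFI_{\mathcal B'}=\psi(f')-\psi(f)\le \psi(f_{\max})-\psi(f_{\min})$. Hence $c(\mathcal A,\mathcal B,\mathcal C)=\psi(f_{\max})-\psi(f_{\min})=c_{\max}$ over this class of partitions, mirroring \eqref{eq:MaxCirculation}.

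I expect the main obstacle to be the topological bookkeeping in the middle step: making rigorous the claim that two or three dual paths with common endpoints cut the primal graph into exactly the right number of connected, boundary-admissible regions, and handling the degenerate cases where a path revisits a face, where $f_{\min}$ or $f_{\max}$ is the outside face, or where the two extremal faces are adjacent. This is where planar duality and the Jordan curve theorem must be invoked with care; the flux computations themselves are then routine telescoping once path-independence of $\psi$ is in hand.
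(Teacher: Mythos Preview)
Your outline is essentially the paper's own strategy: define $\psi$ on dual vertices by the telescoping rule, locate its extrema, take dual paths between the extremal faces as boundaries of a 3-partition, and read off the circulation as $\psi(f_{\max})-\psi(f_{\min})$ via the discrete analogue of \eqref{FluxBetweenAnB}--\eqref{eq:MaxCirculation}. The separation step is likewise handled in the paper through planar duality (dual cycles are discrete Jordan curves, equivalently bonds of the primal).

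The one concrete ingredient you are missing is the \emph{existence} of the three dual paths. You write ``pick two internally disjoint paths $\gamma_1,\gamma_2$ \ldots\ adding a third dual path $\gamma_3$ interior to one of those pieces,'' but you give no reason such paths exist between $f_{\min}$ and $f_{\max}$. The paper supplies exactly this lemma: because $\mathcal G_F^{\rm chordal}$ is triangulated, its dual $(\mathcal G_F^{\rm chordal})^*$ is 3-edge-connected, and Menger's theorem then guarantees three pairwise edge-disjoint paths $P_1,P_2,P_3$ between $v_{\min}^*$ and $v_{\max}^*$. The paper then distinguishes the case where these are also internally vertex-disjoint --- the three cycles $C_{ij}=P_i\cup P_j$ are contractible (via the 3-path condition) and bound three connected regions --- from the merely edge-disjoint case, where some $C_{ij}$ is self-intersecting and the corresponding part need not be connected. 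Your ``main obstacle'' paragraph correctly flags this topological bookkeeping, but without the 3-edge-connectivity/Menger step the existence of $\gamma_1,\gamma_2,\gamma_3$ is unjustified and the middle step does not go through. (A minor wording issue: well-definedness of $\psi$ follows from $F\mathds 1=\mathbf 0$ at each \emph{primal vertex}, since those are the generators of the dual cycle space; your phrase ``sum of signed fluxes around each triangular face'' conflates the vertex condition with a face condition, though your next sentence states the correct argument.)
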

\begin{proof}
	Completion of the graph into a chordal graph is the first step of the algorithm and was explained before. We compute $\psi$ as follows.  We assign $0$ at the vertex of the dual of the chordal graph corresponding to the outside face, and proceed to assign values to the remaining vertices of the dual graph so that the difference between values of adjacent vertices equals the (signed, e.g., in the counter-clockwise sense) flux on the corresponding edge of the primal graph. We now explain the last part of the theorem.
	
	Since $\mathcal G^{\rm chordal}_{F}$ is triangular, its dual is 3-edge-connected \cite{gross2003handbook}. By Menger theorem \cite{menger1927allgemeinen, doczkal2019short}, for 3-edge-connected graphs every pair of vertices has 3 edge-disjoint paths in between. Now consider a pair of  vertices on the dual graph, $ v_{min}^{*}$ and $ v_{max}^{*} $,  corresponding to the minimum and maximum of $\psi$, respectively. There are 3 edge-disjoint paths $\mathit{P}_1$, $\mathit{P}_2$, and $\mathit{P}_3$, connecting $ v_{min}^{*}$ and $v_{max}^{*} $. These paths can generate three cycles as follows.
	\[
	\mathit{C}_{ij}=\mathit{P}_{i} \cup \mathit{P}_{j}, \hspace{0.1cm} (1 \leq i < j \leq 3)
	\]
	\noindent
	where $\mathit{C}_{ij} \in \mathcal C$, and $\mathcal C$ is a family of cycles in $(\mathcal G^{\rm chordal}_{F})^*$.
	
	If paths $\mathit{P}_1$, $\mathit{P}_2$, and $\mathit{P}_3$ are also internally disjoint, then they will generate three cycles $\mathit{C}_{12}$, $\mathit{C}_{23}$, and $\mathit{C}_{13}$. If $\mathit{C}_{12}$ and $\mathit{C}_{23}$ are contractible, by 3-path condition \cite{thomassen1990embeddings,mohar2001graphs,de2013algorithms}, $\mathit{C}_{13}$ is also contractible. We only need to show that if $\mathit{P}_2$ lies between $\mathit{P}_1$ and $\mathit{P}_3$, then $\mathit{C}_{13}$ is surface separating and $int(\mathit{C}_{13}) \footnote{ $int(\cdot)$ and $ext(\cdot)$ denote interior and exterior, respectively.}=int(\mathit{C}_{23}) \cup int(\mathit{C}_{12}) \cup \mathit{P_2}$. It follows from Euler's formula that the genus of $int(\mathit{C}_{13}) \cup \mathit{C}_{13}$ is zero, and thus $\mathit{C}_{13}$ is contractible.
	
	Since $\mathit{C}_{ij}$'s are also \textit{discrete Jordan curves} and equivalent to the bonds of $\mathcal G^{\rm chordal}_{F}$ \cite{wilson1979introduction,Chen2013ANO}, by properly selecting interior or exterior of those cycles, these delineate three disjoint and connected sets of primal vertices. Then, by \eqref{eq:MaxCirculation}, the flow that crosses their shared boundaries (i.e., $\mathit{P}_1$, $\mathit{P}_2$, and $\mathit{P}_3$) is maximal.
	
	If $\mathit{P}_1, \mathit{P}_2, \mathit{P}_3$ are edge-disjoint paths between $v_{min}^{*}$ and $v_{max}^{*}$, but not internally disjoint, there exists cycles $\mathit{C}_{ij}$'s that are self-intersecting. That is, the $int(\mathit{C}_{ij})$ is not path-connected and, hence, not contractible.
\end{proof}

\begin{figure}[H]
	\centering
	\includegraphics[scale=0.29]{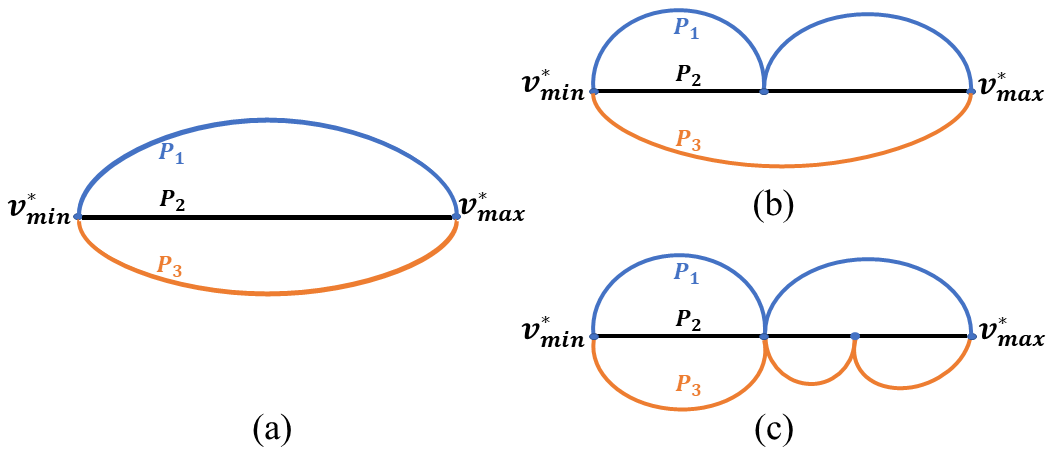}
	\vspace{0.1in}
	\caption{Cases of 3 edge-disjoint paths between $v_{min}^{*}$ and $ v_{max}^{*}$: $\mathit{P}_1, \mathit{P}_2, \mathit{P}_3$ are (a)  internally disjoint, (b,c)  not internally disjoint.}
	\label{fig:3edgedisjointpaths}
\end{figure}
Fig.~\ref{fig:3edgedisjointpaths} exemplifies cases discussed in the above proof.
In case (a) the three cycles \(\mathit{C}_{12}=\mathit{P}_1 \cup \mathit{P}_2, \mathit{C}_{23}=\mathit{P}_2 \cup \mathit{P}_3\), and  \(\mathit{C}_{13}=\mathit{P}_1 \cup \mathit{P}_3\) are contractible. Therefore, all 3-partitions, $int(\mathit{C}_{12}), int(\mathit{C}_{23}), ext(\mathit{C}_{13})$ are connected. The cases (b,c) correspond to cycles that are self-intersecting. Specifically, in Fig. \ref{fig:3edgedisjointpaths}(b) only $\mathit{C}_{12}$ is self-intersecting, whereas  in Fig. \ref{fig:3edgedisjointpaths}(c) all three cycles are self-intersecting. Accordingly, the corresponding 3-partitions may not be connected.\\[-.05in]

\begin{cor}
If there exist 3 edge-disjoint paths that are vertex-disjoint, then each 3-partition is connected. 
\end{cor}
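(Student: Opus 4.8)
The plan is to observe that the hypothesis places us exactly in case (a) of the proof of Theorem \ref{maintheorem}, and then to spell out why the three regions cut out by the theta-subgraph $P_1\cup P_2\cup P_3$ correspond to connected sets of primal vertices. First I would note that three edge-disjoint paths sharing the common endpoints $v_{min}^{*}$ and $v_{max}^{*}$ can never be fully vertex-disjoint, so the hypothesis must be read as \emph{internally} vertex-disjoint: the $P_i$ pairwise meet only at $v_{min}^{*}$ and $v_{max}^{*}$. Consequently each $C_{ij}=P_i\cup P_j$ is a simple cycle, i.e., a discrete Jordan curve in the planar dual $(\mathcal G^{\rm chordal}_{F})^*$, and $P_1\cup P_2\cup P_3$ is a theta-subgraph whose complement on the sphere has exactly three open faces. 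Ordering the paths so that $P_2$ lies between $P_1$ and $P_3$, these faces are $int(\mathit{C}_{12})$, $int(\mathit{C}_{23})$, and $ext(\mathit{C}_{13})$, and by the genus-zero (Euler's formula) argument already used in the theorem each of $\mathit{C}_{12},\mathit{C}_{23},\mathit{C}_{13}$ is contractible.

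Next I would invoke the standard planar duality between cycles and cuts: a simple cycle in the dual of a connected planar graph is the set of dual edges meeting a bond (minimal edge cut) of the primal, and deleting a bond leaves exactly two connected components. Applying this to $\mathit{C}_{12}$, $\mathit{C}_{23}$, $\mathit{C}_{13}$ in turn, the primal-vertex sets of $int(\mathit{C}_{12})$, $int(\mathit{C}_{23})$, $ext(\mathit{C}_{13})$ — which are precisely the parts $\mathcal A$, $\mathcal B$, $\mathcal C$ delivered by Theorem \ref{maintheorem} — are each one side of a bond, hence connected. Since the three open faces of the theta-subgraph are pairwise disjoint and every primal vertex (a face of the dual) lies in exactly one of them, these parts are pairwise disjoint and exhaust $\mathcal V_F$, giving the claimed conclusion.

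The only step that really needs care is the duality step: one must verify that a simple cycle in $(\mathcal G^{\rm chordal}_{F})^*$ corresponds to a genuine \emph{bond} of $\mathcal G^{\rm chordal}_{F}$ (not merely some disconnecting edge set), so that exactly two connected pieces result and no part of the 3-partition is inadvertently split into more than one component; this is where connectedness of $\mathcal G^{\rm chordal}_{F}$ (inherited from strong connectivity of $\mathcal G$) and the triangulation enter. Everything else is a direct specialization of the case-(a) discussion in the proof of Theorem \ref{maintheorem}, so the corollary follows with essentially no new computation.
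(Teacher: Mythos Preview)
Your proposal is correct and follows the same line as the paper, which in fact gives no separate proof of the corollary: it is treated as the direct specialization to the internally-disjoint case already handled inside the proof of Theorem~\ref{maintheorem} and the accompanying discussion of Fig.~\ref{fig:3edgedisjointpaths}(a). Your write-up is more explicit---you spell out the theta-subgraph structure, the reading of ``vertex-disjoint'' as internally vertex-disjoint, and the cycle/bond duality that guarantees each $C_{ij}$ cuts the primal into exactly two connected pieces---but the underlying argument is identical to the paper's.
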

	
Fig. \ref{fig:dualPartitioning} exemplifies Theorem \ref{maintheorem} for a planar graph with two non-equivalent embeddings, where $\mathit{P}_1, \mathit{P}_2, \mathit{P}_3$ are marked with  different colors and the 3-partitions are marked using different colors (red, blue, green) and shapes ($\square$, {\Large $\circ$}, $\triangle$). In the next section we explain how the output of algorithm \ref{algorithm}, and consequently the partitioning, depends on the embedding.

\section{Effect of Embedding on Partitioning}\label{sec:embedding}
Whitney showed that 3-connected graphs have unique embedding, and consequently unique dual graph \cite{whitney1992congruent}. But in general it is possible that if we consider two different embeddings $\mathcal G_{1}, \mathcal G_{2}$ of a planar graph $\mathcal G$, the duals $\mathcal G^{*}_{1}, \mathcal G^{*}_2$ become non-isomorphic, Table \ref{G1G2}. And this may result into a different output of algorithm \ref{algorithm}. 

\begin{table}
\caption{Two embeddings of a graph with non-isomorphic corresponding duals.}
\label{G1G2}
\setlength{\tabcolsep}{5mm} 
\def\arraystretch{1.25} 
\centering
\scalebox{0.7}{%
\begin{tabular}{cccc}
\hline
\\[-0.2cm]
  
\begin{tikzpicture}
\tikzstyle{every node}=[draw, circle, fill=black, inner sep=0pt,minimum size=3pt] 
\draw (3,1) node(1){} -- (3,-1) node(2){} -- (1,-1) node(3){} -- (0,0) node(4){}  -- (1,1) node(5){}-- cycle;
\draw (5) -- (3);
\draw (5) -- (2,0) node(6){};
\draw (1) -- (6);
\draw (2) -- (6);
\draw (3) -- (6);
\end{tikzpicture} & & &\begin{tikzpicture}
\tikzstyle{every node}=[draw, circle, fill=black, inner sep=0pt,minimum size=3pt] 
\draw (0,0) node(1){} -- (1,-1) node(2){} -- (3,0) node(3){} -- (1,1) node(4){}-- cycle;
\draw (2) -- (4);
\draw (4) -- (2,0.2) node(5){};
\draw (5) -- (2,-0.2) node(6){};
\draw (2) -- (6);
\draw (3) -- (6);
\draw (3) -- (5);
\end{tikzpicture}
 
\\
    $ \mathcal G^{*}_{1}$& & & $\mathcal G^{*}_{2} $ 

 \\ [0.3cm]\hline
 
  \end{tabular}}
\end{table}

\subsection{Example}
\label{sec:examples}
Given transition matrix for a planar graph as in \eqref{Pi}, netflux matrix is calculated using equation \eqref{netflux}. Fig. \ref{fig:embedding} indicates two possible graphs which are constructed based on calculated net flux matrix in \eqref{netfluxexample}. These two embeddings of a connected planar graph are related by flipping at separating pair\footnote{In graph theory, a vertex separator for nonadjacent vertices a and b is a vertex subset  $S\subset V$ such that the removal of S from the graph separates a and b into distinct connected components.}. As described in \ref{markoviangraph}, graphs are triangulated, $\mathcal G'_{1}, \mathcal G'_{2}$ in Fig. \ref{fig:dualPartitioning}.\\
\begin{equation}\label{Pi}
 \resizebox{0.8\hsize}{!}{%
$\Pi = \left[\begin{matrix} 0 & 0.25 & 0 & 0 & 0.25 & 0 & 0.25 & 0.25\\ 0.333 & 0 & 0.333 & 0.333 & 0 & 0 & 0 & 0\\0 & 0.25 & 0 & 0.25 & 0 & 0.25 & 0 & 0.25\\0 & 0.333 & 0.333 & 0 & 0.333 & 0 & 0 & 0\\0.333 & 0.333 & 0 & 0.333 & 0 & 0 & 0 & 0\\0 & 0 & 0.333 & 0 & 0 & 0 & 0.333 & 0.333\\0.5 & 0 & 0 & 0 & 0 & 0.5 & 0 & 0\\0.25 & 0 & 0.25 & 0 & 0 & 0.25 & 0.25 & 0 \end{matrix} \right]$
}
\end{equation}
\begin{equation}\label{netfluxexample}
 \resizebox{1\hsize}{!}{%
$F = \left[\begin{matrix} 0 & -0.0076 & 0 & 0 & 0.014 & 0 & -0.016 & 0.0096\\ 0.0076 & 0 & 0.0082& 0.0099 & -0.0256 & 0 & 0 & 0\\0 & -0.0082 & 0 & 0.0017 & 0 & -0.0025 & 0 & 0.009\\0 & -0.0099 & -0.0017 & 0 & 0.0116 & 0 & 0 & 0\\-0.014 & 0.0256 & 0 & -0.0116 & 0 & 0 & 0 & 0\\0 & 0 & 0.0025 & 0 & 0 & 0 & -0.014 & 0.0115\\0.016 & 0 & 0 & 0 & 0 & 0.014 & 0 & -0.03\\-0.0096 & 0 & -0.009 & 0 & 0 & -0.0115 & 0.03 & 0 \end{matrix} \right]$%
}
\end{equation}

\begin{table*}[t]
\captionof{figure}{Two possible embeddings, constructed based on netflux matrix in \eqref{netfluxexample}.}
\label{fig:embedding}
\setlength{\tabcolsep}{5mm} 
\def\arraystretch{1.25} 
\centering
\scalebox{0.5}{%
\begin{tabular}{cccc}
\\[-0.2cm]
\begin{tikzpicture}
\tikzset{vertex/.style = {shape=circle,draw,minimum size=0.5em}}
\tikzset{edge/.style = {->,> = latex'}}
\node[vertex] (2) at  (0,0) {2};
\node[vertex] (4) at  (2,-0.6) {4};
\node[vertex] (5) at  (2,0.6) {5};
\node[vertex] (1) at  (3,3) {1};
\node[vertex] (3) at (3,-3) {3};
\node[vertex] (6) at (4,-0.6) {6};
\node[vertex] (7) at (4,0.6) {7};
\node[vertex] (8) at (6,0) {8};

\draw[edge] (2) to (4);
\draw[edge] (5) to (2);
\draw[edge] (4) to (5);
\draw[edge] (2) to (3);
\draw[edge] (3) to (4);
\draw[edge] (1) to (5);
\draw[edge] (2) to (1);
\draw[edge] (7) to (1);
\draw[edge] (7) to (6);
\draw[edge] (6) to (3);
\draw[edge] (1) to (8);
\draw[edge] (8) to (7);
\draw[edge] (6) to (8);
\draw[edge] (3) to (8);
\end{tikzpicture} 
 &&&
\begin{tikzpicture}
\tikzset{vertex/.style = {shape=circle,draw,minimum size=0.5em}}
\tikzset{edge/.style = {->,> = latex'}}
\node[vertex] (2) at  (0,0) {2};
\node[vertex] (4) at  (2,-0.6) {4};
\node[vertex] (5) at  (2,0.6) {5};
\node[vertex] (1) at  (3,3) {1};
\node[vertex] (3) at (3,-3) {3};
\node[vertex] (6) at (6,-0.6) {6};
\node[vertex] (7) at (6,0.6) {7};
\node[vertex] (8) at (4,0) {8};

\draw[edge] (2) to (4);
\draw[edge] (5) to (2);
\draw[edge] (4) to (5);
\draw[edge] (2) to (3);
\draw[edge] (3) to (4);
\draw[edge] (1) to (5);
\draw[edge] (2) to (1);
\draw[edge] (7) to (1);
\draw[edge] (7) to (6);
\draw[edge] (6) to (3);
\draw[edge] (1) to (8);
\draw[edge] (8) to (7);
\draw[edge] (6) to (8);
\draw[edge] (3) to (8);
\end{tikzpicture}
\\
     $\mathcal G_{1}$& & & $\mathcal G_{2}$  
  \\ [0.3cm]
  \end{tabular}}
\end{table*}
\begin{table*}[t]
\vspace*{0.5 cm}
\captionof{figure}{Triangulated of the graphs in Fig. \ref{fig:embedding}.}
\label{fig:triangulateggraphs}
\setlength{\tabcolsep}{5mm} 
\def\arraystretch{1.25} 
\centering

\scalebox{0.5}{%
\begin{tabular}{cccc}
\\[-0.2cm]

\begin{tikzpicture}[
V/.style = {
            draw,circle,thick,fill=#1},
V/.default = black!0
]
                        
\node[V] (2) at  (0,0) {2};
\node[V] (4) at  (2,-0.6) {4};
\node[V] (5) at  (2,0.6) {5};
\node[V] (1) at  (3,3) {1};
\node[V] (3) at (3,-3) {3};
\node[V] (6) at (4,-0.6) {6};
\node[V] (7) at (4,0.6) {7};
\node[V] (8) at (6,0) {8};

 \begin{scope}[V/.default = green!20]
 \node[text width=0.05cm] at (0.8,2) {};
\node[text width=0.05cm] at (1.9,1.4) {};
\node[text width=0.05cm] at (1.3,0) {};
\node[text width=0.05cm] at (1.8,-1.2) {};
\node[text width=0.05cm] at (3,1.4) {};
\node[text width=0.05cm] at (3.5,0.2) {};
\node[text width=0.05cm] at (2.5,-0.2) {};
\node[text width=0.05cm] at (3,-1.4) {};
\node[text width=0.05cm] at (4.2,1.3) {};
\node[text width=0.05cm] at (4.5,0) {};
\node[text width=0.05cm] at (4.2,-1.3) {};

    \end{scope}
\draw (2) to (4);
\draw (5) to (2);
\draw (4) to (5);
\draw (2) to (3);
\draw (3) to (4);
\draw (1) to (5);
\draw (2) to (1);
\draw (7) to (1);
\draw (7) to (6);
\draw (6) to (3);
\draw (1) to (8);
\draw (8) to (7);
\draw (6) to (8);
\draw (3) to (8);
\draw (5) to (7);
\draw (5) to (6);
\draw (4) to (6);   
\end{tikzpicture} 
&&&
\begin{tikzpicture}[
V/.style = {
            draw,circle,thick,fill=#1},
V/.default = black!0
                        ]
\node[V] (2) at  (0,0) {2};
\node[V] (4) at  (2,-0.6) {4};
\node[V] (5) at  (2,0.6) {5};
\node[V] (1) at  (3,3) {1};
\node[V] (3) at (3,-3) {3};
\node[V] (6) at (6,-0.6) {6};
\node[V] (7) at (6,0.6) {7};
\node[V] (8) at (4,0) {8};
\begin{scope}[V/.default = green!20]
\node[text width=0.05cm] at (1.9,1.4) {};
\node[text width=0.05cm] at (1.3,0) {};
\node[text width=0.05cm] at (1.8,-1.2) {};
\node[text width=0.05cm] at (2.9,1.2) {};
\node[text width=0.05cm] at (2.75,0) {};
\node[text width=0.05cm] at (2.9,-1.2) {};
\node[text width=0.05cm] at (4.1,1.4) {};
\node[text width=0.05cm] at (5.3,0) {};
\node[text width=0.05cm] at (4.1,-1.2) {};
\node[text width=0.05cm] at (0.8,2) {};

    \end{scope}
\draw(2) to (4);
\draw(5) to (2);
\draw(4) to (5);
\draw(2) to (3);
\draw(3) to (4);
\draw(1) to (5);
\draw(2) to (1);
\draw(7) to (1);
\draw(7) to (6);
\draw(6) to (3);
\draw(1) to (8);
\draw(8) to (7);
\draw(6) to (8);
\draw(3) to (8);
\draw(5) to (8);
\draw(4) to (8);
              \end{tikzpicture}
             
\\
     $\mathcal G'_{1}$& & & $\mathcal G'_{2}$ 
  \\ [0.3cm]
  \end{tabular}}
\end{table*}
\begin{table*}[t]
\vspace*{0.5 cm}
\captionof{figure}{Triangulated and dual of the graphs in Fig. \ref{fig:embedding} and their 3-partitions.}
\label{fig:dualPartitioning}
\setlength{\tabcolsep}{4.5mm} 
\def\arraystretch{1.2} 
\centering
\scalebox{0.45}{%
\begin{tabular}{cc}
\\[-3cm]  
\begin{tikzpicture}[
V/.style = {
            draw,circle,thick,fill=#1, minimum size=1.5em},
V/.default = black!20
                        ]
 \tikzset{vertex1/.style = {shape=rectangle,draw,minimum size=1.5em}}
\tikzset{triangle/.style = {fill=blue!20, regular polygon, regular polygon sides=3, inner sep=0.1em}}

\node[V, fill=green!60] (2) at  (0,0) {2};
\node[V, fill=green!60] (4) at  (2,-0.6) {4};
\node[V, fill=green!60] (5) at  (2,0.6) {5};
\node[V, fill=green!60] (1) at  (3,3) {1};
\node[V, fill=green!60] (3) at (3,-3) {3};
\node[V, fill=green!60] (6) at (4,-0.6) {6};
\node[triangle, fill=blue!60] (7) at (4,0.6) {7};
\node[vertex1, fill=red!60] (8) at (6,0) {8};

 \begin{scope}[V/.default = yellow!20]
\node[V] (125) at (1.9,1.4) {1};
\node[V] (245) at (1.3,0) {2};
\node[V] (out) at (-2,2) {0};
\node[V] (234) at (1.8,-1.2) {3};
\node[V] (368) at (4.2,-1.3) {10};
\node[V] (678) at (4.5,0) {9};
\node[V] (178) at (4.2,1.3) {8};
\node[V] (157) at (3,1.4) {4};
\node[V] (346) at (3,-1.4) {7};
\node[V] (456) at (2.5,-0.2) {6};
\node[V] (567) at (3.5,0.2) {5};
    \end{scope}
\draw (2) to (4);
\draw (5) to (2);
\draw (4) to (5);
\draw (2) to (3);
\draw (3) to (4);
\draw (1) to (5);
\draw (2) to (1);
\draw (7) to (1);
\draw (7) to (6);
\draw (6) to (3);
\draw (1) to (8);
\draw (8) to (7);
\draw (6) to (8);
\draw (3) to (8);
\draw (5) to (7);
\draw (5) to (6);
\draw (4) to (6);
\draw[dashed] 
       (125) to [out=-140,in=90](245) 
       (234) to [out=-135,in=-100](out)
       (245) to [out=-110,in=145](234)
       (125) to [out=120,in=40](out)
       (125) to [out=40,in=120](157)
       (245) to [out=-10,in=170](456)
       (456) to [out=20,in=205](567)
       (456) to [out=-60,in=110](346)
       (368) to [out=190,in=10](346)
       (346) to [out=170,in=-10](234);
  \draw
  [red] (678) to [out=60,in=-30](178);
  \draw 
[blue] (678) to [out=170,in=-10](567)
 (157) to [out=-60,in=110](567)
 (157) to [out=40,in=120](178);
\draw 
[green] (178) to [out=90,in=45, looseness=1.3](out)
  (368) to [out=-80,in=-100,looseness=2](out)
  (678) to [out=-60,in=30](368);

\end{tikzpicture} 
&
\begin{tikzpicture}[
V/.style = {
            draw,circle,thick,fill=#1, minimum size=1.5em},
V/.default = black!20
                        ]
\tikzset{vertex1/.style = {shape=rectangle,draw,minimum size=1.5em}}
\tikzset{triangle/.style = {fill=blue!20, regular polygon, regular polygon sides=3,inner sep=0.1em}}    
\node[V, fill=green!60, minimum size=1.5em] (1) at  (3,3) {1};                    
\node[vertex1, fill=red!60] (2) at  (0,0) {2};
\node[vertex1, fill=red!60] (3) at (3,-3) {3};
\node[triangle, fill=blue!60] (4) at  (2,-0.6) {4};
\node[V, fill=green!60,  minimum size=1.5em] (5) at  (2,0.6) {5};
\node[vertex1, fill=red!60] (6) at (6,-0.6) {6};
\node[V, fill=green!60,  minimum size=1.5em] (7) at (6,0.6) {7};
\node[triangle, fill=blue!60] (8) at (4,0) {8};
\begin{scope}[V/.default = yellow!20]
\node[V] (125) at (1.9,1.4) {1};
\node[V] (245) at (1.3,0) {2};
\node[V] (out) at (-2,2) {0};
\node[V] (234) at (1.8,-1.2) {3};
\node[V] (458) at (2.75,0) {5};
\node[V] (368) at (4.1,-1.2) {9};
\node[V] (678) at (5.3,0) {8};
\node[V] (178) at (4.1,1.4) {7};
\node[V] (158) at (2.9,1.2) {4};
\node[V] (348) at (2.9,-1.2) {6};

    \end{scope}
\draw(2) to (4);
\draw(5) to (2);
\draw(4) to (5);
\draw(2) to (3);
\draw(3) to (4);
\draw(1) to (5);
\draw(2) to (1);
\draw(7) to (1);
\draw(7) to (6);
\draw(6) to (3);
\draw(1) to (8);
\draw(8) to (7);
\draw(6) to (8);
\draw(3) to (8);
\draw(5) to (8);
\draw(4) to (8);
\draw[dashed] 
       (234) to [out=-135,in=-100](out)
       (368) to [out=-80,in=-100,looseness=2](out)
       (125) to [out=0,in=140](158)
       (458) to [out=-80,in=100](348)
       (178) to [out=90,in=45, looseness=1.3](out);
        \draw
        [red]    (245) to [out=-110,in=145](234)
        (348) to [out=-20,in=210](368)
        (678) to [out=-100,in=20](368)
        (234) to [out=-20,in=200](348);
        \draw
        [blue] (245) to [out=0,in=-180](458)
        (158) to [out=-100,in=80](458)
        (158) to [out=20,in=180](178)
        (678) to [out=100,in=-20](178);
        \draw
        [green] 
        (125) to [out=-140,in=90](245)
        (125) to [out=120,in=40](out) 
        (678) to [out=0,in=65,looseness=2](out);
              \end{tikzpicture}
  \\[-2.5cm]
     {\Large $\mathcal G'_{1}$}& {\Large $\mathcal G'_{1}$} 
  \\ [0.3cm]

  \end{tabular}}
\end{table*}

Based on Algorithm \ref{algorithm}, a vector of curl potentials for each triangulated graph is calculated,
\begin{equation}\label{phis}
\begin{aligned}
\resizebox{0.95\hsize}{!}{%
$\psi_{1} $=$ \left[0, -0.0076, 0.0181, 0.0082, 0.0064, 0.0064, 0.0064, 0.0064,-0.0096, 0.0205, 0.009\right]^{T},$%
}
\\
\resizebox{0.95\hsize}{!}{%
$\psi_{2} $=$ \left[0, -0.0076, 0.0181, 0.0082, 0.0064, 0.0064, 0.0064, 0.016, -0.014, -0.0025\right]^{T}.$%
}
\end{aligned}
\end{equation}
where $\psi_{i_{j}}$ corresponds to the curl potential of the $j^{th}$ face of graph $\mathcal G'_{i}$, $i \in \{1, 2\}$.
The faces with maximum and minimum potential for $\mathcal G'_{1}$ and $\mathcal G'_{2}$ are $\{\mathit{f}_{8}, \mathit{f}_{9}\}$ and $\{\mathit{f}_2, \mathit{f}_{8}\}$, respectively. Applying Theorem \ref{maintheorem}, the  3-partitions for them are $\{\{6\}, \{7\}, \{1, 2, 3, 4, 5, 8\}\}$ and $\{\{4, 8\}, \{2, 3, 6\}, \{1, 5, 7\}\}$,
respectively. 

This example highlights that the output of Algorithm \ref{algorithm} and consequently the partitioning of a graph varies with the embedding. Hence, in order to determine faces with extremum potentials we need to specify the embedding along with the transition matrix. To specify the embedding, we need to conduct a rotation system; there exists a unique such rotation system for every locally oriented graph embedding \cite{gross2001topological}.\\

\spacingset{.98}

\bibliographystyle{IEEEtran}
\bibliography{references}
\end{document}